\DeclareMathOperator*{\argmin}{arg\,min}
\newtheorem{thm}{Theorem}
\newtheorem{col}{Corollary}
\newtheorem{lem}{Lemma}
\newtheorem{assumption}{Assumption}
\newtheorem{definition}{Definition}
\newtheorem{example}{Example}
\newtheorem{prop}{Proposition}
\definecolor{blu}{RGB}{0, 102, 204}
\definecolor{purp}{RGB}{128,0,128}
\definecolor{rd}{RGB}{255,69,0}
\definecolor{org}{RGB}{255, 95, 31}
\definecolor{cyn}{RGB}{0, 200, 200}
\date{}
\begin{document}

\title{
	Sequential Interval Passing for Compressed Sensing 
}
\author{%
  \IEEEauthorblockN{Salman Habib	\thanks{This work was supported in part by the NSF and Office of the Under Secretary of Defense (OUSD) – Research and Engineering, Grant ITE2515378, as part of the NSF Convergence Accelerator Track G: Securely Operating Through 5G Infrastructure Program.} and Taejoon Kim}
  \IEEEauthorblockA{School of Electrical, Computer and Energy Engineering \\
                    Arizona State University\\
                    Email: \{shabib13, taejoonkim\}@asu.edu}
  \and
  \IEEEauthorblockN{R\'{e}mi A. Chou}
  \IEEEauthorblockA{Department of Computer Science and Engineering\\ 
                    University of Texas at Arlington\\
                    Email: remi.chou@uta.edu}
}

\maketitle
\begin{abstract} 
The reconstruction of sparse signals from a limited set of measurements poses a significant challenge as it necessitates a solution to an underdetermined system of linear equations. Compressed sensing (CS) deals with sparse signal reconstruction using techniques such as linear programming (LP) and iterative message passing schemes. The interval passing algorithm (IPA) is an attractive CS approach due to its low complexity when compared to LP. In this paper, we propose a sequential IPA that is inspired by sequential belief propagation decoding of low-density-parity-check (LDPC) codes used for forward error correction in channel coding. In the sequential setting, each check node (CN) in the Tanner graph of an LDPC measurement matrix is scheduled one at a time in every iteration, as opposed to the standard ``flooding'' interval passing approach in which all CNs are scheduled at once per iteration. The sequential scheme offers a significantly lower message passing complexity compared to flooding IPA on average, and for some measurement matrix and signal sparsity, a complexity reduction of $36\%$ is achieved. We show both analytically and numerically that the reconstruction accuracy of the IPA is not compromised by adopting our sequential scheduling approach. 
\end{abstract}

\section{Introduction}
\label{sec:Intro}

Consider a signal $\mathbf{x}\in \mathbb{R}^n$, and a signal $\mathbf{y}\in \mathbb{R}^m$ containing $m$ measurements of $\mathbf{x}$ subject to the condition $\mathbf{y}=\mathbf{A}\mathbf{x}$, where $\mathbf{A}\in \mathbb{R}^{m\times n}$ is a measurement matrix. The entries $x(1),\ldots,x(n)$ of $\mathbf{x}$ can be determined from $\mathbf{y}$ by solving a system of $m$ linear equations, each containing $n$ unknowns.  However, if the signal $\mathbf{y}$ is compressed, \emph{i.e.,} if $m<n$, then the system is underdetermined and can have infinitely many solutions. Compressed sensing (CS) is a technique used to correctly reconstruct an unknown signal $\mathbf{x}$ from its compressed version $\mathbf{y}$, which inevitably faces the challenge of solving an undetermined system of linear equations. It has been shown in~\cite{Candes} that CS requires a much smaller number of samples than required by the Shannon-Nyquist sampling rule to correctly reconstruct $\mathbf{x}$, provided that the signal is $k$-sparse, \emph{i.e.,}$k\ll n$ is the  number of non-zero elements in $\mathbf{x}$. One approach focuses on finding a reconstructed version $\mathbf{\hat{x}}$ of $\mathbf{x}$ which has the smallest $\ell_0$ norm \cite{Don06}. However, this approach can be computationally infeasible for practical applications due to its NP-hard nature. There exists a method based on linear programming (LP) for finding $\mathbf{\hat{x}}$ \cite{CDS98}, which is also computationally intensive, especially for large measurement matrices.

To alleviate the complexity problem, iterative message passing techniques such as the interval passing algorithm (IPA) has been developed for reconstructing sparse and non-negative $\mathbf{x}$ \cite{Chandar}. 
IPA operates on the Tanner graph of low-density-parity-check (LDPC) measurement matrices \cite{Tan81}. The Tanner graph of a $(\gamma,\rho)$ regular LDPC  measurement matrix consists of check nodes (CNs) with degree $\rho$ and variable nodes (VNs) with degree $\gamma$, and edges connecting them. There exists other iterative compressed sensing approaches such as those in \cite{Sarvotham,Hoa}, and the approximate message passing (AMP) algorithms \cite{DonMalMon09,Bayati,Chunli,Nikolajs,Kim15,10206981}. The construction of quasi-cyclic spatially coupled LDPC measurement matrices for the IPA was discussed in \cite{mycs}. Although the IPA is less accurate compared to LP, its computational complexity is significantly lower than LP. In~\cite{Shantharam}, a low complexity IPA is proposed for binary signals by converting the maximization operation of the IPA into a logical OR operation. In \cite{Ravanmehr}, the conditions based on which the IPA can overcome stopping set errors have been identified, including upper bounds on the size of stopping sets for some LDPC measurement matrices.

In this paper, we propose a sequential IPA in which all the CNs in the Tanner graph of an LDPC measurement matrix are randomly scheduled in each iteration. A CN scheduling operation refers to updating all the neighboring VNs of this CN, as well as updating all the CN neighbors of these VNs. In contrast, the existing IPA is based on a flooding schedule in which all CNs are updated first, and then all VNs are updated, repeating in every iteration \cite{Ravanmehr}. Our proposed sequential IPA is mainly inspired by the computational advantage of the sequential BP decoder. We show empirically that similar to the channel coding case, sequential IPA requires fewer number of iterations for convergence, on average, than the flooding IPA for a given measurement matrix and signal sparsity, resulting in a fewer number of CN to VN message updates overall. Our sequential interval passing scheme is applicable to both binary and non-binary CS. 
The main contributions of this paper are as follows. 
\begin{itemize}
	\item We incorporate the scheduling time of a CN in the proposed sequential IPA to distinguish between CNs which are scheduled prior to scheduling the current node in a given iteration. In comparison, the iterative CS methods discussed in \cite{Chandar,Krishnan,Ravanmehr} lack the concept of scheduling time as all CNs are scheduled simultaneously per iteration. 
	\item We analytically show that the sequential IPA is superior to the flooding IPA both in terms of computational efficiency and reconstruction accuracy. The fact that a failure of the sequential IPA guarantees failure of the flooding IPA is also shown. Additionally, we prove that for a $(2,3)$ LDPC measurement matrix, the frame error rate (FER) of the sequential IPA is upper bounded by the FER of the flooding IPA. 
\end{itemize}

\section{Preliminaries}
CS is the process of reconstructing an $n$-dimensional sparse signal $\mathbf{x}$ from an $m$-dimensional measurement $\mathbf{y}=\mathbf{A}\mathbf{x}$, where $m<n$ and $\mathbf{A}\in \mathbb{R}^{m\times n}$ is the measurement matrix. A CS algorithm succeeds at correctly retrieving the unknown signal $\mathbf{x}$ only if it is sparse, \emph{i.e.,} $\mathbf{x}$ must contain $k\ll n$ non-zero elements. A straightforward approach to solving $\mathbf{x}$ from $\mathbf{y}$ is expressed as an optimization problem \cite{Don06}
\begin{equation}
	\argmin \lVert\mathbf{x}\lVert_0~\mathrm{s.t.}~ \mathbf{y}=\mathbf{A}\mathbf{x},
\end{equation}
which is known as $\ell_0$-norm minimization. This approach is NP-hard as the search space consists of $n\choose k$ possible candidates. A less computationally intensive approach, expressed as
\begin{equation}
\argmin \lVert \mathbf{x}\lVert_1~\mathrm{s.t.}~ \mathbf{y}=\mathbf{A}\mathbf{x},
\end{equation}
minimizes the $\ell_1$-norm of the signal and can be solved via LP which has polynomial complexity~\cite{CDS98}. 
It has been shown in \cite{Ravanmehr} that the reconstruction complexity of CS can be further reduced by employing iterative schemes such as the IPA that operates on the Tanner graph of an LDPC measurement matrix. The IPA has complexity even lower than LP, albeit at a lower reconstruction accuracy. 

An LDPC measurement matrix $\mathbf{A}\in \mathbb{R}_{\geq 0}^{m\times n}$, $m< n$, can be represented as a \emph{Tanner graph} $\mathcal{G}_{\mathbf{A}}=(\mathcal{V}\cup \mathcal{C},\mathcal{E})$, where $\mathcal{V}=\{v_1,\ldots,v_n\}$ is a set of VNs corresponding to the columns of $\mathbf{A}$, $\mathcal{C}=\{c_1,\ldots,c_m\}$ is a set of CNs corresponding to the rows of $\mathbf{A}$, and $\mathcal{E}$ is a set of edges connecting $\mathcal{V}$ to $\mathcal{C}$ that correspond to the non-zero elements in $\mathbf{A}$ \cite{Tan81}. Let $A_{c_j,v_i}$ denote an entry in row $c_j$ and column $v_i$ of $\mathbf{A}$. Since LDPC measurement matrices are sparse, $\mathbf{A}$ has a much smaller number of non-zero entries compared to its dimension. 
Let $\mu_{v_i \rightarrow c_j}^{(\ell)}$ (resp., $M_{v_i \rightarrow c_j}^{(\ell)}$) denote a message lower (resp., upper) bound propagated by VN $v_i$ to CN $c_j$ during iteration $\ell$. Likewise, $\mu_{c_j \rightarrow v_i}^{(\ell)}$ (resp., $M_{c_j \rightarrow v_i}^{(\ell)}$) refers to a message lower (resp., upper) bound propagated by CN $c_j$ to VN $v_i$ during iteration $\ell$ (see Fig. \ref{fig:ga}). Also let $\mathcal{N}(v_i)$ be the set of neighboring CNs of VN $v_i$, and $\mathcal{N}(c_j)$ be the set of neighboring VNs of CN $c_j$. More specifically, the lower bound of a VN to CN message in iteration $\ell$ is given by
\begin{align}
\mu_{v_i \rightarrow c_j}^{(\ell)}= 
\max_{c'\in \mathcal{N}(v_i)} \mu_{c' \rightarrow v_i}^{(\ell-1)}\times A_{c_j,v_i},
\label{eq:mu_vc_fld}
\end{align}
whereas a CN to VN message lower bound is computed as 
\begin{equation}
	\mu_{c_j \rightarrow v_i}^{(\ell)}=
	\begin{cases}
	0, \mathrm{if}~\ell=0, \\
	\frac{1}{A_{c_j,v_i}} \left( y(c_j)-\sum_{v'\in \mathcal{N}(c_j)\setminus v_i}M_{v' \rightarrow c_j}^{(\ell)}\right), \mathrm{if}~\ell>0.
	\end{cases} 
\label{eq:mu_cv_fld}
\end{equation}

\begin{figure}[h]
  \centering
  \includegraphics[scale=0.6]{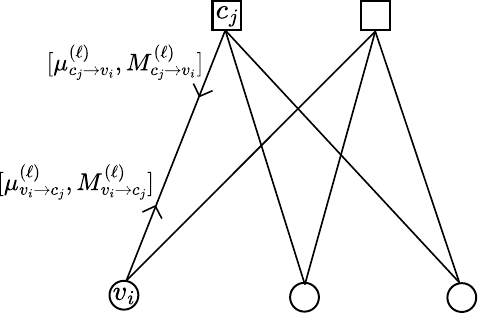}
  \caption{Example of $\mathcal{G}_{\mathbf{A}}$ corresponding to a $(2,3)$ LDPC measurement matrix with $m=2$ and $n=3$. Circles represent VNs and squares represent CNs. The intervals passed by VNs $v_i$ and $c_j$ are also shown.}
\label{fig:ga}
\end{figure}

In contrast, the corresponding message upper bounds are expressed as
\begin{align}
M_{v_i \rightarrow c_j}^{(\ell)}= 
\min_{c'\in \mathcal{N}(v_i)} M_{c' \rightarrow v_i}^{(\ell-1)}\times A_{c_j,v_i}, 
\label{eq:M_vc_fld}
\end{align}
and
\begin{align}
M_{c_j \rightarrow v_i}^{(\ell)}=
	\begin{cases}
	y(c_j)/A_{c_j,v_i}, \mathrm{if}~\ell=0, \\
	\frac{1}{A_{c_j,v_i}} \left( y(c_j)-\sum_{v'\in \mathcal{N}(c_j)\setminus v_i}\mu_{v' \rightarrow c_j}^{(\ell)}\right), \mathrm{if}~\ell>0.
	\end{cases} 
\label{eq:M_cv_fld}
\end{align}  
For a given signal $\mathbf{x}$, a maximum of $\ell_\max$ iterations are performed. In the remainder of the paper let $[n]\triangleq \{1,\ldots,n\}$, $n\in \mathbb{N}$. The standard (flooding) IPA, abbreviated as $\mathrm{FIPA}(\mathbf{y}, \mathbf{A})$, or FIPA, is shown in Algorithm 1 of \cite{Ravanmehr}. 
The message $\mu_{v_i \rightarrow \mathcal{N}({v_i})}^{(\ell)}$ represents a lower bound propagated by VN $v_i$ to one of the neighboring CNs in $\mathcal{N}({v_i})$, and the upper bound propagated by this CN to $v_i$ is expressed as $M_{\mathcal{N}({v_i})\rightarrow v_i}^{(\ell)}$.

\section{The Sequential Interval Passing Algorithm}  
In our proposed sequential IPA, abbreviated as $\mathrm{SIPA}(\mathbf{y}, \mathbf{A})$, or SIPA, all the CNs in $\mathcal{G}_{\mathbf{A}}$ are serially scheduled in each iteration. Since a VN is connected to more than one CN, the VN can be updated more than once per iteration, unlike flooding in which a VN is updated only once per iteration. At a given time, a CN is scheduled randomly from $\mathcal{C}\setminus \mathcal{C}_e$, where each CN in $\mathcal{C}_e$ has been scheduled at an earlier time. It is worthwhile to note that the order in which the CNs are scheduled can  be learned with the aid of reinforcement learning (RL), and it has been shown in \cite{reldec} that an RL-based CN scheduling order outperforms a randomly generated one when soft valued sequential BP decoding of LDPC codes is considered. However, in this work, experiments reveal that a random CN scheduling order for SIPA performs just as well as an RL-based CN scheduling order. This is because in the case of SIPA, the long term rewards of the CNs, earned by the RL algorithm over many signals and iterations, are almost identical unlike channel coding. Hence, in the remainder of the paper we only consider random scheduling of the CNs for sequential interval passing.

Let $\Gamma_v=\{t_1,\ldots,t_{\gamma}| i\in [\gamma],t_j \in [m]\}$ represent the scheduling order of all the CNs connected to a VN $v\in \mathcal{V}$ with degree $\gamma$, and $t_j$ is the scheduling time of  the $j$-th neighboring CN of VN $v$. Suppose that $c_j$ is scheduled at current time $t_j$, $c_e$ is scheduled at an earlier time $t_e$, and $c_l$ is scheduled at a later time $t_l$, \emph{i.e.,} $t_e<t_j<t_l$, and $e,j,l\in [m]$. Then, a VN to CN message lower bound in case of SIPA is expressed as
\begin{align}
	\mu_{v_i \rightarrow c_j}^{(\ell,t_j)}&=\max\bigg(\max_{c_e\in \mathcal{N}(v_i)\setminus \{c_j,\mathcal{C}_l(v_i)\}}\mu_{c_e \rightarrow v_i}^{(\ell,t_e)},\mu_{c_j \rightarrow v_i}^{(\ell-1,t_j)}, \nonumber \\
	&~~~~~ \max_{c_l\in \mathcal{N}(v_i)\setminus\{c_j,\mathcal{C}_e(v_i)\}}\mu_{c_l \rightarrow v_i}^{(\ell-1,t_l)}\bigg)\times A_{c_j,v_i}. 
\label{eq:mu_vc_seq}
\end{align}
where $\mathcal{C}_l(v_i)$ is a set of all the CNs in $\mathcal{N}(v_i)$ that will be scheduled at a later time $t_l>t_j$, and $\mathcal{C}_{e}(v_i)$ is a set of all the CNs in $\mathcal{N}(v_i)$ that was scheduled at an earlier time $t_e<t_j$. Note that $\mathcal{C}_e(v_i)=\emptyset$ if $t_j=1$, and $\mathcal{C}_l(v_i)=\emptyset$ if $t_j=m$. 

In (\ref{eq:mu_vc_seq}), a CN to VN message lower bound is given by
\begin{equation}
	\mu_{c_j \rightarrow v_i}^{(\ell,t_j)}=
	\begin{cases}
	&0, \mathrm{if}~\ell=0,t_j\leq m, \\
	&\mu_{c_j \rightarrow v_i}^{(\ell-1,t)}, \mathrm{if}~\ell> 1, ~t_j=0,~t\geq 0, \\
	&\frac{1}{A_{c_j,v_i}} \left( y(c_j)-\sum_{v'\in \mathcal{N}(c_j)\setminus v_i}M_{v' \rightarrow c_j}^{(\ell,t_{v'})}\right), \\
	&~~~~~~~~~~~~~~~~~~ \mathrm{if}~\ell>0, ~t_j> 0,t_{v'}\leq t_j, 
	\end{cases} 
\label{eq:mu_cv_seq}
\end{equation} 
\noindent where $t_j$ is time at which CN $c_j$ is scheduled, and $t_{v'}$ is the scheduling time of a CN $c\in \mathcal{N}(v')$. Note that $t_j=0$ indicates a CN $c_j$ has not been scheduled, which is possible if partial scheduling is done, \emph{i.e.,} only a subset of CNs are scheduled every iteration. However, this strategy is not considered in this work as we schedule all CNs in every iteration. 

Likewise, for the upper bounds,
\begin{align}
	M_{v_i \rightarrow c_j}^{(\ell,t_j)}&=\min\bigg(\min_{c_e\in \mathcal{N}(v_i)\setminus \{c_j,\mathcal{C}_l(v_i)\}}M_{c_e \rightarrow v_i}^{(\ell,t_e)},M_{c_j \rightarrow v_i}^{(\ell-1,t_j)}, \nonumber \\
	&~~~~~ \min_{c_l\in \mathcal{N}(v_i)\setminus\{c_j,\mathcal{C}_e(v_i)\}}M_{c_l \rightarrow v_i}^{(\ell-1,t_l)}\bigg)\times A_{c_j,v_i}, 
\label{eq:M_vc_seq}
\end{align}
where
\begin{equation}
M_{c_j \rightarrow v_i}^{(\ell,t_j)}=
	\begin{cases}
	&y(c_j)/A_{c_j,v_i}, \mathrm{if}~\ell=0, t_j\leq m, \\ 
	&M_{c_j \rightarrow v_i}^{(\ell-1,t)}, \mathrm{if}~\ell> 1, ~t_j=0,~t\geq 0, \\
	&\frac{1}{A_{c_j,v_i}} \left( y(c_j)-\sum_{v'\in \mathcal{N}(c_j)\setminus v_i}\mu_{v' \rightarrow c_j}^{(\ell,t_{v'})}\right),  \\
	&~~~~~~~~~~~~~~~~~~  \mathrm{if}~\ell> 0, ~t_j> 0,t_{v'}\leq t_j. 
	\end{cases} 
\label{eq:M_cv_seq}
\end{equation}  

{\linespread{0.2}\selectfont  
\begin{algorithm}[tb]
\caption{The sequential IPA: $\mathrm{SIPA}(\mathbf{y}, \mathbf{A})$}\label{alg:sipa}
\SetAlgoLined
\DontPrintSemicolon
\SetKwInOut{Input}{Input}
\SetKwInOut{Output}{Output}
\textbf{Input: }$\mathbf{y}$ and $\mathbf{A}$ such that $\mathbf{y}=\mathbf{A}\mathbf{x}$\;
\textbf{Output: }reconstructed signal $\mathbf{\hat{x}}=[\hat{x}(1),\ldots,\hat{x}(n)]$\;
\textbf{Initialization:} for all $c_j\in \mathcal{C}$, $v_i \in \mathcal{N}(c_j)$, $\mu_{c_j \rightarrow v_i}^{(0,0)}=0$ and $M_{c_j \rightarrow v_i}^{(0,0)}=y(c_j)/A_{c_j,v_i}$, $p_{v_i}^{(\ell)}\leftarrow 0 ~\forall v_i,\ell$, $\ell\leftarrow 1$\;

\While{iteration $\ell< \ell_{\max}$} { 
	\For{each CN $c_j\in \mathcal{C}$ scheduled at time $t_j$}{ 
	\For{each VN $v_i\in \mathcal{N}(c_j)$}{
		\For{each CN $c'\in \mathcal{N}(v_i)$}{
			\If{$p_{v_i}^{(\ell)}=0 \vee \mu_{v_i \rightarrow c'}^{(\ell-1,t_{v_i})}< M_{v_i \rightarrow c'}^{(\ell-1,t_{v_i})}$} { 
			compute and propagate $\mu_{v_i \rightarrow c'}^{(\ell,t_j)}$ according to (\ref{eq:mu_vc_seq})\; 
			compute and propagate $M_{v_i \rightarrow c'}^{(\ell,t_j)}$ according to (\ref{eq:M_vc_seq})\; 		
			$p_{v_i}^{(\ell)}\leftarrow p_{v_i}^{(\ell)}+1$\;
			}
		}
	}
		\For{each VN $v_i\in \mathcal{N}(c_j)$}{
			compute and propagate $\mu_{c_j \rightarrow v_i}^{(\ell,t_j)}$ according to (\ref{eq:mu_cv_seq})\; 
			\If{$\mu_{c_j \rightarrow v_i}^{(\ell,t_j)}< 0$} {
				$\mu_{c_j \rightarrow v_i}^{(\ell,t_j)}=0$\;
			}
			compute and propagate $M_{c_j \rightarrow v_i}^{(\ell,t_j)}$ according to (\ref{eq:M_cv_seq})\; 
		}
	}
	\tcp{make decision}
	\For{each VN $v_i\in \mathcal{V}$}{
		\If{$\bigg(\mu_{v_i \rightarrow \mathcal{N}({v_i})}^{(\ell,t_{\mathcal{N}({v_i})})}=M_{\mathcal{N}({v_i})\rightarrow v_i}^{(\ell,t_{\mathcal{N}({v_i})})} ~\vee~ \ell=\ell_\max\bigg) $ } {\; 
			$\hat{x}(v_i)=\mu_{v_i \rightarrow \mathcal{N}({v_i})}^{(\ell,t_{\mathcal{N}({v_i})})}$\;	
		}
	}
	$\ell\leftarrow \ell+1$\;
}
\end{algorithm}
}

Our proposed sequential IPA is shown in Algorithm \ref{alg:sipa} of this paper. Once a CN $c_j$ is scheduled at time $t_j$, Steps 6-21 of this algorithm are performed in its sub-graph. To prevent redundant VN updates, we utilize the variable $p_{v_i}^{(\ell)}$ in Step 11 to record the number of times VN $v_i$ is updated during iteration $\ell$.  A VN is updated more than once per iteration only if the condition in Step 8 is true. 

The intuitive reason for the improved performance of SIPA over the FIPA is that the former utilizes messages propagated by the earlier scheduled CNs, in the same iteration, for updating some or all of the VNs connected to the currently scheduled CN. On the other hand, in flooding, all VNs are updated using CN to VN messages computed in the previous iteration.

\section{Complexity Reduction Via Sequential CN Scheduling}
In this section we discuss conditions under which SIPA requires fewer number of iterations than FIPA for correctly reconstructing a signal, and then provide illustrative examples for clarification. Note that in Step 24 of Algorithm \ref{alg:sipa}, if $\mu_{v_i \rightarrow c_j}^{(1,t_j)}$ matches $M_{c_j \rightarrow v_i}^{(1,t_j)}$ and they both equal $x(v_i)$, then SIPA will correctly output $\hat{x}(v_i)$. If $\mu_{v_i \rightarrow c_j}^{(1,t_j)}$ (resp., $M_{c_j \rightarrow v_i}^{(1,t_j)}$) is larger (resp., smaller) than its initial value, which is zero (resp., $y(c_j)/A_{c_j,v_i}$), then the chance that a lower bound propagated by VN $v_i$ to CN $c_j$ matches the upper bound propagated to this node by $c_j$ increases, and so does thus the chance of correctly reconstructing a non-zero $x(v_i)$. The following lemma establishes a condition under which SIPA outperforms its flooding counterpart.
\begin{lem} 
There exists a $x(v_i)\in \mathbb{R}_{\geq 0}$ and $\mathbf{A} \in \mathbb{R}_{\geq 0}^{m\times n}$ for which SIPA propagates $\mu_{v_i \rightarrow c_j}^{(1,t_j)}>0$ and $M_{c_j \rightarrow v_i}^{(1,t_j)}\leq y(c_j)/A_{c_j,v_i}$. 
\label{lem:iter1}
\end{lem}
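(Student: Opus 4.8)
The plan is to prove the lemma by an explicit construction together with a one‑iteration trace of SIPA. The second inequality, $M_{c_j \rightarrow v_i}^{(1,t_j)}\leq y(c_j)/A_{c_j,v_i}$, needs essentially no work: for $\ell=1$, $t_j>0$, (\ref{eq:M_cv_seq}) reads $M_{c_j \rightarrow v_i}^{(1,t_j)}=\frac{1}{A_{c_j,v_i}}\big(y(c_j)-\sum_{v'\in\mathcal{N}(c_j)\setminus v_i}\mu_{v' \rightarrow c_j}^{(1,t_{v'})}\big)$, and every term in that sum is a product of the non‑negative entry $A_{c_j,v'}$ with a CN‑to‑VN lower bound that is clipped to be non‑negative in Step~15 of Algorithm~\ref{alg:sipa}; hence the subtracted sum is $\geq0$ and the bound holds (with equality in the example below). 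So the real content is to force $\mu_{v_i \rightarrow c_j}^{(1,t_j)}>0$ for some VN $v_i$ and some CN $c_j\in\mathcal{N}(v_i)$ other than the one scheduled first.

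To get that, I would argue as follows. Let $c_1$ be the CN scheduled first in iteration~$1$. At that moment all VN‑to‑CN lower bounds still equal their initial value $0$, so by (\ref{eq:mu_cv_seq}) it emits $\mu_{c_1 \rightarrow v}^{(1,t_1)}=\frac{1}{A_{c_1,v}}\big(y(c_1)-\sum_{v'\in\mathcal{N}(c_1)\setminus v}M_{v' \rightarrow c_1}^{(1,t_{v'})}\big)$, where (\ref{eq:M_vc_seq}) gives $M_{v' \rightarrow c_1}^{(1,t_{v'})}=\big(\min_{c''\in\mathcal{N}(v')}y(c'')/A_{c'',v'}\big)A_{c_1,v'}$. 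If $\mathbf{A}$ and the support of $\mathbf{x}$ are chosen so that, for some neighbour $v_i$ of $c_1$, the sum of these upper bounds is strictly less than $y(c_1)$, then $\mu_{c_1 \rightarrow v_i}^{(1,t_1)}>0$. Then for any CN $c_j\ne c_1$ with $v_i\in\mathcal{N}(c_j)$ we have $c_1\in\mathcal{C}_e(v_i)$ at the time $c_j$ is scheduled, so the first inner maximum in (\ref{eq:mu_vc_seq}) contains $\mu_{c_1 \rightarrow v_i}^{(1,t_1)}$ and hence $\mu_{v_i \rightarrow c_j}^{(1,t_j)}\geq \mu_{c_1 \rightarrow v_i}^{(1,t_1)}\,A_{c_j,v_i}>0$. (This is impossible for FIPA, where (\ref{eq:mu_vc_fld}) forces every VN‑to‑CN lower bound in iteration~$1$ to be $0$, which is really the point of the lemma.)

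A concrete witness I would use is
\[
\mathbf{A}=\begin{pmatrix}1&1&0&0\\1&0&1&0\\0&1&0&1\end{pmatrix},\qquad \mathbf{x}=(2,1,0,0)^{\mathsf{T}},\qquad \mathbf{y}=(3,2,1)^{\mathsf{T}},
\]
with the CNs scheduled in the order $c_1,c_2,c_3$. Here $M_{v_2 \rightarrow c_1}^{(1,1)}=\min\{y(c_1),y(c_3)\}=1$, so $\mu_{c_1 \rightarrow v_1}^{(1,1)}=y(c_1)-M_{v_2 \rightarrow c_1}^{(1,1)}=2>0$; when $c_2$ is scheduled we get $\mu_{v_1 \rightarrow c_2}^{(1,2)}=2>0$, while $M_{c_2 \rightarrow v_1}^{(1,2)}=y(c_2)-\mu_{v_3 \rightarrow c_2}^{(1,2)}=2=y(c_2)/A_{c_2,v_1}$, so the two assertions hold with $v_i=v_1$ and $c_j=c_2$.

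The one genuine obstacle is picking the example so that the first scheduled CN already emits a strictly positive message. For ``balanced'' matrices---such as the all‑ones $(2,3)$ matrix of Fig.~\ref{fig:ga}, for which $y(c_1)=\cdots=y(c_m)$ for every non‑negative $\mathbf{x}$---the quantity $y(c_1)-\sum_{v'}M_{v' \rightarrow c_1}^{(1,t_{v'})}$ is forced to be $\leq0$, so a naive attempt gains nothing in iteration~$1$; one must instead use a matrix whose rows have sufficiently different supports (or, for the $(2,3)$ case, non‑unit weights) together with a signal whose support makes some measurements vanish. Once that is arranged, the rest of the proof is the direct substitution carried out above.
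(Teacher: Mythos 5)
Your proposal is correct, but it proves the lemma by a different route than the paper. The paper's proof is conditional: it introduces the two events $E_1$ and $E_2$ (bounding $\sum_{v'\in\mathcal{N}(c_j)\setminus v_i}M_{v'\rightarrow c_j}^{(1,t_{v'})}$ below $y(c_j)$ and $\sum_{v'\in\mathcal{N}(c_j)\setminus v_i}\mu_{v'\rightarrow c_j}^{(1,t_{v'})}$ by $y(c_j)/A_{c_j,v_i}$) and argues via (\ref{eq:mu_cv_seq}), (\ref{eq:mu_vc_seq}) and (\ref{eq:M_cv_seq}) that whenever these events occur the two claimed inequalities follow; it never exhibits a concrete $(\mathbf{x},\mathbf{A})$, and the events are set up so they can be reused in Proposition~\ref{prop:first_res}. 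You instead produce an explicit witness: a $3\times 4$ matrix, $\mathbf{x}=(2,1,0,0)^{\mathsf T}$, the schedule $c_1,c_2,c_3$, and a full one-iteration trace showing $\mu_{v_1\rightarrow c_2}^{(1,2)}=2>0$ and $M_{c_2\rightarrow v_1}^{(1,2)}=2=y(c_2)/A_{c_2,v_1}$ (your trace checks out against (\ref{eq:mu_vc_seq})--(\ref{eq:M_cv_seq}) and Algorithm~\ref{alg:sipa}), together with the general observation that the upper-bound inequality in fact always holds because the subtracted VN-to-CN lower bounds are non-negative thanks to the clipping step. What your approach buys is a fully self-contained existence proof — arguably tighter than the paper's, which leaves the actual occurrence of $E_1\cap E_2$ implicit — plus the useful remark that "balanced" matrices such as the all-ones $(2,3)$ graph of Fig.~\ref{fig:ga} cannot serve as witnesses in iteration $1$; what the paper's event-based formulation buys is the mechanism itself ($E_1$, $E_2$), which it needs verbatim in the recovery condition (\ref{eq:propeq1}) of Proposition~\ref{prop:first_res}, something your single numerical example does not supply.
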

\begin{proof}
If $x(v_i)>0$, then for a CN $c_j\in \mathcal{N}(v_i)$, we have $y(c_j)>0$. Consider two message passing events
\begin{align}
	E_1&\triangleq \bigcup_{c_j \in \mathcal{N}(v_i)}\bigg\{\{y(c_j)>0\} \bigcap \bigg\{\sum_{v'\in \mathcal{N}(c_j)\setminus v_i}M_{v' \rightarrow c_j}^{(1,t_{v'})}< \nonumber \\
	&~~~~~ y(c_j)\bigg\}\bigg\}
\end{align}
and
\begin{align}
	E_2& \triangleq \bigcup_{c_j \in \mathcal{N}(v_i)}\bigg\{\{y(c_j)>0\} \bigcap \bigg\{\sum_{v'\in \mathcal{N}(c_j)\setminus v_i}\mu_{v' \rightarrow c_j}^{(1,t_{v'})}\leq \nonumber \\
	&~~~~~ y(c_j)/A_{c_j,v_i} \bigg\}\bigg\},
\end{align}
where $\cup$ and $\cap$ denote arbitrary union and intersection operations, respectively, and $t_{v'}\leq t_j$. If $E_1$ is true, then according to (\ref{eq:mu_cv_seq}) $\mu_{c_j \rightarrow v_i}^{(1,t_j)}>0$, and hence from (\ref{eq:mu_vc_seq}) we get $\mu_{v_i \rightarrow c_j}^{(1,t_j)}>0$. Additionally, if $E_2$ is true, then from (\ref{eq:M_cv_seq}) we get $M_{c_j \rightarrow v_i}^{(1,t_j)}\leq y(c_j)/A_{c_j,v_i}$. This completes the proof.
\end{proof}
In the following proposition, we will invoke events $E_1$ and $E_2$ from the proof of Lemma~\ref{lem:iter1} to state a condition for correct recovery of a non-zero binary signal  $x(v_i)$ using $\mathrm{SIPA}(\mathbf{y},\mathbf{A})$ when $\ell=1$.
\begin{prop}
For a given $\mathbf{x} \in \mathbb{F}_2^n$ and $\mathbf{A} \in \mathbb{F}_2^{m\times n}$ SIPA correctly recovers $x(v_i)$ if the event
\begin{equation}
	E_1 \cap E_2 \bigcap \bigg\{\bigcup_{c_j\in \mathcal{N}(v_i)} \bigg\{\mu_{v_i \rightarrow c_j}^{(1,t_j)}=1 \wedge M_{c_j \rightarrow v_i}^{(1,t_j)}=1\bigg\}\bigg\}
	\label{eq:propeq1}
\end{equation}
occurs. However, FIPA  propagates $\mu_{v_i \rightarrow c_j}^{(1)}=0$ and $M_{c_j \rightarrow v_i}^{(1)}=y(c_j)$ irrespective of $x(v_i)$ and $\mathbf{A}$, causing IPA failure when $\ell=1$.
\label{prop:first_res}
\end{prop}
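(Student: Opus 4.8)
The plan is to prove Proposition~\ref{prop:first_res} in two parts, corresponding to its two sentences. For the first sentence, I would start from the event in \eqref{eq:propeq1}, which already contains $E_1\cap E_2$. By Lemma~\ref{lem:iter1} (whose proof exhibits exactly the implications we need), whenever $E_1$ holds we get $\mu_{c_j\rightarrow v_i}^{(1,t_j)}>0$ for some $c_j\in\mathcal{N}(v_i)$, and whenever $E_2$ holds we get $M_{c_j\rightarrow v_i}^{(1,t_j)}\le y(c_j)/A_{c_j,v_i}$ for that $c_j$. Specializing to the binary case $\mathbf{x}\in\mathbb{F}_2^n$, $\mathbf{A}\in\mathbb{F}_2^{m\times n}$, every nonzero entry $A_{c_j,v_i}$ equals $1$, so the lower bound $\mu_{c_j\rightarrow v_i}^{(1,t_j)}$ is a strictly positive value that, after the update \eqref{eq:mu_vc_seq} (a max, then multiplication by $A_{c_j,v_i}=1$), propagates to $\mu_{v_i\rightarrow c_j}^{(1,t_j)}>0$; similarly $M_{c_j\rightarrow v_i}^{(1,t_j)}\le y(c_j)=1$. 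The third conjunct in \eqref{eq:propeq1} then forces both bounds to equal $1$ for some neighbor $c_j$, and since a binary signal has $x(v_i)\in\{0,1\}$ with $x(v_i)\neq 0$ forcing $x(v_i)=1$, the decision rule in Step~24 of Algorithm~\ref{alg:sipa} sets $\hat{x}(v_i)=\mu_{v_i\rightarrow\mathcal{N}(v_i)}^{(1,t_{\mathcal{N}(v_i)})}=1=x(v_i)$, i.e.\ correct recovery.

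For the second sentence, I would argue directly from the flooding recursions. At $\ell=1$, the VN-to-CN lower bound \eqref{eq:mu_vc_fld} is $\mu_{v_i\rightarrow c_j}^{(1)}=\big(\max_{c'\in\mathcal{N}(v_i)}\mu_{c'\rightarrow v_i}^{(0)}\big)\times A_{c_j,v_i}$, and by \eqref{eq:mu_cv_fld} every CN-to-VN lower bound at $\ell=0$ is $\mu_{c'\rightarrow v_i}^{(0)}=0$; hence $\mu_{v_i\rightarrow c_j}^{(1)}=0$ regardless of $\mathbf{x}$ or $\mathbf{A}$. For the upper bound, \eqref{eq:M_cv_fld} gives at $\ell=1$: $M_{c_j\rightarrow v_i}^{(1)}=\frac{1}{A_{c_j,v_i}}\big(y(c_j)-\sum_{v'\in\mathcal{N}(c_j)\setminus v_i}\mu_{v'\rightarrow c_j}^{(1)}\big)$, and since we just showed each $\mu_{v'\rightarrow c_j}^{(1)}=0$, this collapses to $M_{c_j\rightarrow v_i}^{(1)}=y(c_j)/A_{c_j,v_i}=y(c_j)$ in the binary case. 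Thus after the first flooding iteration the interval $[\mu_{v_i\rightarrow c_j}^{(1)},\,M_{c_j\rightarrow v_i}^{(1)}]=[0,\,y(c_j)]$ has not tightened at all, so the matching condition in Step~24 cannot be satisfied at $\ell=1$ (for any $c_j$ with $y(c_j)>0$, i.e.\ whenever $x(v_i)$ is actually relevant), which is the claimed FIPA failure.

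I do not expect a serious obstacle here; the result is essentially an unpacking of the two sets of recursions at $\ell=1$, plus the specialization $A_{c_j,v_i}\in\{0,1\}$. The one point requiring a little care is the logic threading \eqref{eq:propeq1} through the SIPA updates: I must be explicit that the union over $c_j\in\mathcal{N}(v_i)$ in the third conjunct is compatible with the same $c_j$ for which $E_1$ (resp.\ $E_2$) delivers the strict-positivity (resp.\ upper) bound, so that no spurious mismatch between ``which neighbor works'' arises—this is automatic because $E_1$ and $E_2$ are themselves unions over $c_j\in\mathcal{N}(v_i)$, and in the binary setting a single scheduled neighbor $c_j$ with $y(c_j)=1$ suffices. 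I would also remark, to connect with the surrounding narrative, that the contrast is structural: SIPA can use the $t_{v'}\le t_j$ (already-updated, current-iteration) messages in \eqref{eq:mu_cv_seq}–\eqref{eq:M_cv_seq}, whereas flooding is stuck with the $\ell=0$ initial messages, which is precisely why the interval can close for SIPA but provably cannot for FIPA when $\ell=1$.
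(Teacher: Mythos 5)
Your proposal is correct and follows essentially the same route as the paper's own (much terser) verification: for SIPA, the event in \eqref{eq:propeq1} together with Lemma~\ref{lem:iter1} makes the matching condition in Steps 23--28 of Algorithm~\ref{alg:sipa} hold with both bounds equal to $1$, while for FIPA, unpacking \eqref{eq:mu_vc_fld}--\eqref{eq:M_cv_fld} at $\ell=1$ shows the interval remains $[0,\,y(c_j)]$, so the flooding decision condition cannot be met. Your write-up is simply a more explicit version of the paper's argument, including the same binary specialization $A_{c_j,v_i}=1$ and the same appeal to the initial messages $\mu_{c'\rightarrow v_i}^{(0)}=0$.
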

\noindent The proposition can be verified by observing Step 3 of Algorithm 1 in \cite{Ravanmehr} and Steps 23-28 of Algorithm \ref{alg:sipa} in this paper. We can see that if $\mu_{v_i \rightarrow c_j}^{(1)}=0$ and $M_{c_j \rightarrow v_i}^{(1)}=y(c_j)$, then the condition in Step 3 of Algorithm 1 in \cite{Ravanmehr} is not satisfied, causing failure of FIPA. But for the sequential case, the signal $x(v_i)$ is correctly reconstructed. Note that in (\ref{eq:propeq1}) the condition of Lemma~\ref{lem:iter1} is true as $\mu_{v_i \rightarrow c_j}^{(1,t_j)}>0$ and $M_{c_j \rightarrow v_i}^{(1,t_j)}\leq y(c_j)$.

In the following, we show that for some $x(v_i)$, $\mathrm{SIPA}(\mathbf{y},\mathbf{A})$ more accurately reconstructs $x(v_i)$ than $\mathrm{FIPA}(\mathbf{y},\mathbf{A})$ for $\ell>1$, and we show in Corollary \ref{col:seq<fld} that failure of SIPA implies failure of FIPA. This is true because in the former case, the lower bound $\mu_{v_i \rightarrow c_j}^{(\ell,t_j)}$ propagated by VN $v_i$ is larger than $\mu_{v_i \rightarrow c_j}^{(\ell)}$ propagated in flooding, and it also receives an upper bound $M_{c_j \rightarrow v_i}^{(\ell,t_j)}$ no larger than the corresponding upper bound $M_{c_j \rightarrow v_i}^{(\ell)}$ in flooding. A larger lower bound propagated by a VN is more likely to be non-zero if $x(v_i)\neq 0$, and a smaller upper bound propagated to the same VN is more likely to match its lower bound. 
\begin{lem} 
There exists an $x(v_i)\in \mathbb{R}_{\geq 0}$ and $\mathbf{A} \in \mathbb{R}_{\geq 0}^{m\times n}$ for which $\mu_{v_i \rightarrow c_j}^{(\ell,t_j)}>\mu_{v_i \rightarrow c_j}^{(\ell)}$ and $M_{c_j \rightarrow v_i}^{(\ell,t_j)}\leq M_{c_j \rightarrow v_i}^{(\ell)}$, for any $\ell>1$.
\label{lem:l>1}
\end{lem}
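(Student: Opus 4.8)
The plan is to prove the lemma by a monotone-domination argument, with the strictness witness inherited from Lemma~\ref{lem:iter1}. The first ingredient is the (standard) monotonicity of the FIPA messages in the iteration index: from the extremal initialization $\mu^{(0)}_{c_j\rightarrow v_i}=0$ and $M^{(0)}_{c_j\rightarrow v_i}=y(c_j)/A_{c_j,v_i}$, a short induction on $\ell$ using (\ref{eq:mu_vc_fld})--(\ref{eq:M_cv_fld}) shows that every lower-bound message is non-decreasing and every upper-bound message is non-increasing in $\ell$ (a decrease in the VN-to-CN upper bounds forces an increase in the CN-to-VN lower bounds, hence in the VN-to-CN lower bounds, and dually). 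From here on, all statements refer to the instance produced by Lemma~\ref{lem:iter1}, extended as in the third paragraph.

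The core step is to show that, for every edge $(v_i,c_j)$ and every iteration $\ell$ and sub-time $t_j$ at which a SIPA message is formed,
\[
\mu_{v_i\rightarrow c_j}^{(\ell,t_j)}\ge\mu_{v_i\rightarrow c_j}^{(\ell)},\quad M_{v_i\rightarrow c_j}^{(\ell,t_j)}\le M_{v_i\rightarrow c_j}^{(\ell)},
\]
and likewise for the CN-to-VN messages, by induction on the chronological order of SIPA's updates. For a VN-to-CN lower bound, the maximum in (\ref{eq:mu_vc_seq}) runs over exactly the neighbour set that appears in the FIPA rule (\ref{eq:mu_vc_fld}): the earlier-scheduled CNs contribute iteration-$\ell$ messages that, by the induction hypothesis, are at least $\mu_{c_e\rightarrow v_i}^{(\ell)}\ge\mu_{c_e\rightarrow v_i}^{(\ell-1)}$, while $c_j$ and the later-scheduled CNs contribute iteration-$(\ell-1)$ messages at least as large as the corresponding FIPA iteration-$(\ell-1)$ values; hence the SIPA maximum dominates the FIPA maximum. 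For a CN-to-VN lower bound, (\ref{eq:mu_cv_seq}) subtracts $\sum_{v'}M_{v'\rightarrow c_j}^{(\ell,t_{v'})}$, each term of which was formed earlier in the order and is therefore $\le M_{v'\rightarrow c_j}^{(\ell)}$ by the induction hypothesis; comparing with (\ref{eq:mu_cv_fld}) and using $A_{c_j,v_i}>0$ gives the claim, and the upper-bound inequalities follow by the dual computation with (\ref{eq:M_vc_seq})--(\ref{eq:M_cv_seq}). One bookkeeping point: the skip rule in Step~8 of Algorithm~\ref{alg:sipa} may freeze a VN's messages once its bounds have equalized, but this does not disturb the induction, since a frozen message is pinned to the true value $x(v')$, which the FIPA interval $[\mu_{v'\rightarrow c_j}^{(\ell)},M_{v'\rightarrow c_j}^{(\ell)}]$ always contains.

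It then remains to exhibit an edge where the lower-bound inequality is strict at iteration $\ell$; the required $M_{c_j\rightarrow v_i}^{(\ell,t_j)}\le M_{c_j\rightarrow v_i}^{(\ell)}$ is already supplied by the domination step. Lemma~\ref{lem:iter1} (with Proposition~\ref{prop:first_res}) provides an instance with $\mu_{v_i\rightarrow c_j}^{(1,t_j)}>0=\mu_{v_i\rightarrow c_j}^{(1)}$, and to carry this strict gap to a prescribed $\ell>1$ I would take a chain-like (zigzag) Tanner graph on which, under flooding, a newly recovered VN advances the ``recovery frontier'' by one stage per iteration, whereas one SIPA sweep advances it by two or more stages, because a CN scheduled late in a sweep already sees the VNs refreshed by the CNs scheduled before it. Choosing the chain long enough that FIPA has not converged by iteration $\ell$, the frontier edge then carries $\mu_{v_i\rightarrow c_j}^{(\ell,t_j)}>0$ for SIPA while FIPA still has $\mu_{v_i\rightarrow c_j}^{(\ell)}=0$ there; since the same chain works for every iteration below its convergence time and may be taken arbitrarily long, this covers ``any $\ell>1$''.

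I expect the main obstacle to be the strictness step rather than the domination step: domination is a mechanical induction once FIPA monotonicity is available, but isolating a clean family of instances on which the sequential frontier provably outruns the flooding frontier at \emph{every} iteration --- and checking that the strict gap is never washed out by the $\max/\min$ operations or by the Step~8 skip rule --- needs a careful, if elementary, tracking of the messages along the chain. A secondary and purely notational hurdle is spelling out the ``chronological order'' used in the domination induction, since SIPA messages are indexed by both an iteration counter $\ell$ and a sub-time $t_j$.
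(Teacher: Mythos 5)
Your domination step is sound and, in substance, reproduces the inequality the paper itself relies on later (in the proof of Theorem~\ref{thm:fld<seq}): $M_{v\rightarrow c}^{(\ell,t)}\le M_{v\rightarrow c}^{(\ell)}$ and hence $\mu_{c\rightarrow v}^{(\ell,t)}\ge \mu_{c\rightarrow v}^{(\ell)}$; combined with FIPA's interval-shrinking monotonicity this indeed yields the weak inequality $M_{c_j\rightarrow v_i}^{(\ell,t_j)}\le M_{c_j\rightarrow v_i}^{(\ell)}$ in the lemma, and your handling of the Step~8 skip rule via the invariant that every interval contains the true value is fine. Where you diverge from the paper is in how strictness of the lower-bound comparison is obtained. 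The paper gives no constructed instance: immediately after the lemma it introduces the events $E_3$, $E_4$, $E_5$, and its justification is local and conditional --- if one incoming CN-to-VN message is strictly larger under SIPA at iteration $\ell-1$ while the remaining ones coincide with their flooding values ($E_4$), then the maximum in (\ref{eq:mu_vc_seq}) strictly exceeds the maximum in (\ref{eq:mu_vc_fld}), while $E_5$ supplies the upper-bound ordering; realizability is then anchored, rather loosely, to the $\ell=1$ instance of Lemma~\ref{lem:iter1} and Proposition~\ref{prop:first_res}. Your route (uniform domination plus a propagation-frontier argument on a chain) is more global and, if completed, would be more self-contained than the paper's event-based formulation.

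The genuine gap is exactly the step you flag yourself: the chain/zigzag construction that is supposed to witness $\mu_{v_i\rightarrow c_j}^{(\ell,t_j)}>\mu_{v_i\rightarrow c_j}^{(\ell)}$ at an arbitrary prescribed $\ell>1$ is only described, not verified, and that is where the entire existential content of the lemma sits. Your instinct that the Lemma~\ref{lem:iter1} instance alone does not suffice is correct --- flooding lower bounds can catch up one iteration later, so strictness at $\ell=1$ does not propagate automatically --- but the proposal stops short of exhibiting even one concrete pair $(\mathbf{x},\mathbf{A})$ with a scheduling order for which the messages are actually tracked to the stated iteration. To close it you would need either to carry out the bookkeeping on an explicit chain (checking that the sequential frontier stays strictly ahead at every iteration before flooding converges, and that neither the Step~8 skip rule nor the clamping of negative CN-to-VN lower bounds in Algorithm~\ref{alg:sipa} erases the gap), or to adopt the paper's cheaper device and show by induction on $\ell$ that the events $E_3$--$E_5$ are realizable at every iteration, re-seeding the strict inequality from iteration $\ell-1$ to iteration $\ell$ through (\ref{eq:mu_vc_seq}) and (\ref{eq:mu_cv_seq}).
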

\noindent If the events
\begin{align*}
E_3&\triangleq \bigcup_{c\in \mathcal{N}(v_i)} \bigg\{\mu_{c \rightarrow v_i}^{(\ell-1,t)}>\mu_{c \rightarrow v_i}^{(\ell-1)} \bigg\}, \\
E_4&\triangleq E_3 \bigcap_{c'\in \mathcal{N}(v_i)\setminus c}\bigg\{\mu_{c' \rightarrow v_i}^{(\ell-1,t)}=\mu_{c' \rightarrow v_i}^{(\ell-1)} \bigg\}, \\
	E_5&\triangleq \bigcap_{v\in \mathcal{N}(c_j)} \bigg\{M_{c_j \rightarrow v}^{(\ell,t_j)}\leq M_{c_j \rightarrow v}^{(\ell)}\bigg\}
\end{align*}
are true, we can state a condition for correct recovery of a non-zero binary signal  $x(v_i)$ using $\mathrm{SIPA}(\mathbf{y},\mathbf{A})$ when $\ell>1$.
\begin{prop}
For a given $\mathbf{x} \in \mathbb{F}_2^n$ and $\mathbf{A} \in \mathbb{F}_2^{m\times n}$, SIPA correctly recovers a non-zero $x(v_i)$ if the event
\begin{equation}
	E_5 \bigcap \bigg\{\bigcup_{c_j\in \mathcal{N}(v_i)} \bigg\{\mu_{v_i \rightarrow c_j}^{(\ell,t_j)}=1 \wedge M_{c_j \rightarrow v_i}^{(\ell,t_j)}=1\bigg\}\bigg\}
	\label{eq:propeq2}
\end{equation}
is true, but FIPA cannot recover $x(v_i)$ because $E_4$ implies $\mu_{v_i \rightarrow c_j}^{(\ell)}<\mu_{v_i \rightarrow c_j}^{(\ell,t_j)}$. 
\label{prop: l>1}
\end{prop}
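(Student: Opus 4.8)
The plan is to treat the $\mathrm{SIPA}$ success and the $\mathrm{FIPA}$ failure separately, keeping the standing hypotheses $E_3,E_4,E_5$ from the discussion preceding the statement together with the extra event in~(\ref{eq:propeq2}), and working at a fixed iteration $\ell>1$.

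For the $\mathrm{SIPA}$ half I would argue straight from the decision rule in Steps~23--28 of Algorithm~\ref{alg:sipa}. The second factor of the event in~(\ref{eq:propeq2}) hands us a neighbor $c_j\in\mathcal{N}(v_i)$ with $\mu_{v_i\rightarrow c_j}^{(\ell,t_j)}=1$ and $M_{c_j\rightarrow v_i}^{(\ell,t_j)}=1$. Reading $\mathcal{N}(v_i)$ in Step~24 as this particular $c_j$ (the interpretation ``one of the neighboring CNs'' fixed after Algorithm~1 of~\cite{Ravanmehr}), the equality $\mu_{v_i\rightarrow c_j}^{(\ell,t_j)}=M_{c_j\rightarrow v_i}^{(\ell,t_j)}$ makes the test in Step~24 fire and the subsequent assignment sets $\hat{x}(v_i)=\mu_{v_i\rightarrow c_j}^{(\ell,t_j)}=1$. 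Since $\mathbf{x}\in\mathbb{F}_2^n$ and $x(v_i)$ is assumed non-zero, $x(v_i)=1=\hat{x}(v_i)$, so $\mathrm{SIPA}$ reconstructs $x(v_i)$ correctly at iteration~$\ell$; this is exactly the favorable configuration anticipated after Lemma~\ref{lem:iter1}, where the lower bound out of $v_i$ has been pushed up to $x(v_i)$ and the upper bound into $v_i$ pushed down to it.

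For the $\mathrm{FIPA}$ half I would first note that~(\ref{eq:mu_vc_seq}) and~(\ref{eq:mu_vc_fld}) both write the outgoing VN-to-CN lower bound as $A_{c_j,v_i}$ times a maximum, over \emph{all} of $\mathcal{N}(v_i)$, of incoming CN-to-VN lower bounds, differing only in the scheduling-time superscript attached to each incoming message; over $\mathbb{F}_2$ every edge carries $A_{c_j,v_i}=1$, so the flooding bound $\mu_{v_i\rightarrow c_j}^{(\ell)}$ equals the common value $\max_{c'\in\mathcal{N}(v_i)}\mu_{c'\rightarrow v_i}^{(\ell-1)}$ for every $c_j\in\mathcal{N}(v_i)$. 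By $E_4$ the $\mathrm{SIPA}$ iteration-$(\ell-1)$ message from the distinguished node $c$ strictly exceeds its flooding value, $\mu_{c\rightarrow v_i}^{(\ell-1,t)}>\mu_{c\rightarrow v_i}^{(\ell-1)}$, while every other incoming lower bound agrees between the two schedules; invoking the monotonicity of the interval-passing lower bounds across iterations (so the iteration-$\ell$ messages $\mathrm{SIPA}$ reads off from the earlier-scheduled neighbors are no smaller than the iteration-$(\ell-1)$ messages entering the flooding maximum, and the $\mathrm{SIPA}$ entry for $c$ still dominates $\mu_{c\rightarrow v_i}^{(\ell-1)}$ strictly), the set over which $\mathrm{SIPA}$ maximizes dominates the flooding one term by term and stays strictly above it on the coordinate $c$, which is the one attaining the flooding maximum in the instance guaranteed by Lemma~\ref{lem:l>1}. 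Hence $\mu_{v_i\rightarrow c_j}^{(\ell)}<\mu_{v_i\rightarrow c_j}^{(\ell,t_j)}$, and since $\mu_{v_i\rightarrow c_j}^{(\ell,t_j)}=1$ by~(\ref{eq:propeq2}) this gives $\mu_{v_i\rightarrow c_j}^{(\ell)}<1$ for every neighbor $c_j$. The lower bound flooding would settle on for $v_i$ at iteration $\ell$ is therefore strictly below $x(v_i)=1$, so Step~3 of Algorithm~1 of~\cite{Ravanmehr} cannot yield the correct reconstruction at iteration~$\ell$ --- either no lower/upper pair matches, or a match occurs below $1$ --- and $\mathrm{FIPA}$ fails where $\mathrm{SIPA}$ succeeds. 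The hypothesis $E_5$ enters only through the companion bound $M_{c_j\rightarrow v_i}^{(\ell,t_j)}\leq M_{c_j\rightarrow v_i}^{(\ell)}$ of Lemma~\ref{lem:l>1} and is not needed for the recovery argument itself.

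The step I expect to be the main obstacle is the strict inequality $\mu_{v_i\rightarrow c_j}^{(\ell)}<\mu_{v_i\rightarrow c_j}^{(\ell,t_j)}$: term-by-term domination of one finite set over another only yields a weak inequality between the maxima, so one must show that the strict gap created at $c$ is not masked by another coordinate attaining the maximum. This is where the precise form of $E_4$ is essential --- it forces the two schedules to agree on every coordinate other than $c$, so the two maxima can differ only through $c$ --- and where the monotonicity of the lower-bound messages is used to keep the $\mathrm{SIPA}$ entry for $c$ from dropping below $\mu_{c\rightarrow v_i}^{(\ell-1)}$ once $c$ has been scheduled in the current iteration. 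A subsidiary, routine point is to check that in the binary model the sub-unit flooding lower bound at iteration $\ell$ is indeed the value $\mathrm{FIPA}$ would report for $v_i$, so that failure at iteration~$\ell$ genuinely follows.
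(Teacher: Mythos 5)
Your proposal is correct and follows essentially the same route as the paper, which offers only a one-line verification (``in a similar fashion to Proposition~\ref{prop:first_res}'', i.e., comparing Steps 23--28 of Algorithm~\ref{alg:sipa} with Step 3 of Algorithm 1 of \cite{Ravanmehr}): your SIPA-success half is exactly that verification, and your FIPA-failure half spells out the message comparison under $E_4$ that the paper simply asserts inside the statement. You in fact go further than the paper by flagging the only delicate point---that $E_4$ alone gives term-by-term domination, so the strict inequality $\mu_{v_i \rightarrow c_j}^{(\ell)}<\mu_{v_i \rightarrow c_j}^{(\ell,t_j)}$ needs the strictly larger coordinate $c$ to determine the maximum---a subtlety the paper leaves implicit.
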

\noindent This proposition can be verified in a similar fashion to Proposition~\ref{prop:first_res}. Note that if Proposition~\ref{prop: l>1} is true for a given $\mathbf{x} \in \mathbb{F}_2^n$ and $\mathbf{A} \in \mathbb{F}_2^{m\times n}$, then FIPA can recover $x(v_i)$ if it performs a larger number of iterations than SIPA. Moreover, in (\ref{eq:propeq2}), Lemma~\ref{lem:l>1} holds since $\mu_{v_i \rightarrow c_j}^{(1,t_j)}>0$ and $M_{c_j \rightarrow v_i}^{(1,t_j)}\leq y(c_j)$.

\begin{thm} 
For a given $\mathbf{x} \in \mathbb{R}_{\geq 0}^n$ and $\mathbf{A} \in \mathbb{R}_{\geq 0}^{m\times n}$, if SIPA propagates $\mu_{v_i \rightarrow c_j}^{(\ell,t_j)}= 0$, then FIPA will propagate $\mu_{v_i \rightarrow c_j}^{(\ell)}=0$.
\label{thm:fld<seq}
\end{thm}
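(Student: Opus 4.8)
The plan is to deduce the theorem from a single \emph{universal} domination statement: at every iteration $\ell$, on every edge $(v_i,c_j)\in\mathcal{E}$, and for every admissible scheduling time $t_j$,
\begin{align}
\mu_{v_i \rightarrow c_j}^{(\ell,t_j)}\geq \mu_{v_i \rightarrow c_j}^{(\ell)},\quad & M_{v_i \rightarrow c_j}^{(\ell,t_j)}\leq M_{v_i \rightarrow c_j}^{(\ell)}, \nonumber\\
\mu_{c_j \rightarrow v_i}^{(\ell,t_j)}\geq \mu_{c_j \rightarrow v_i}^{(\ell)},\quad & M_{c_j \rightarrow v_i}^{(\ell,t_j)}\leq M_{c_j \rightarrow v_i}^{(\ell)}, \nonumber
\end{align}
where the quantities without a time index are the FIPA messages of (\ref{eq:mu_vc_fld})--(\ref{eq:M_cv_fld}). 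Granting this, the theorem is immediate: since $\mathbf{A}\in\mathbb{R}_{\geq 0}^{m\times n}$ and every FIPA CN-to-VN lower bound is clamped to be nonnegative in Algorithm~1 of \cite{Ravanmehr}, (\ref{eq:mu_vc_fld}) gives $\mu_{v_i \rightarrow c_j}^{(\ell)}\geq 0$; hence if SIPA propagates $\mu_{v_i \rightarrow c_j}^{(\ell,t_j)}=0$, then $0=\mu_{v_i \rightarrow c_j}^{(\ell,t_j)}\geq\mu_{v_i \rightarrow c_j}^{(\ell)}\geq 0$, so $\mu_{v_i \rightarrow c_j}^{(\ell)}=0$. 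This is the ``holds always'' companion of Lemma~\ref{lem:l>1}, which only exhibits instances where the first inequality is \emph{strict}; here the non-strict version is needed, but for all $\mathbf{x}$ and $\mathbf{A}$.

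I would prove the four inequalities by induction on $\ell$, with a secondary induction on the scheduling time $\tau=1,\dots,m$ inside iteration $\ell$, carrying along as part of the hypothesis the interval-shrinking property of SIPA, i.e. $\mu_{\cdot\rightarrow\cdot}^{(\ell,\cdot)}\geq\mu_{\cdot\rightarrow\cdot}^{(\ell-1,\cdot)}$ and $M_{\cdot\rightarrow\cdot}^{(\ell,\cdot)}\leq M_{\cdot\rightarrow\cdot}^{(\ell-1,\cdot)}$ on each edge. The base case $\ell=0$ holds with equality by the initializations $\mu_{c_j\rightarrow v_i}^{(0,0)}=0=\mu_{c_j\rightarrow v_i}^{(0)}$ and $M_{c_j\rightarrow v_i}^{(0,0)}=y(c_j)/A_{c_j,v_i}=M_{c_j\rightarrow v_i}^{(0)}$. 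For the inductive step the crucial point is that every update map occurring in (\ref{eq:mu_vc_seq})--(\ref{eq:M_cv_seq}) and (\ref{eq:mu_vc_fld})--(\ref{eq:M_cv_fld}) is order preserving: the coordinatewise $\max$, the coordinatewise $\min$, the affine map $u\mapsto \frac{1}{A_{c_j,v_i}}(y(c_j)-\sum_{v'} u_{v'})$ (decreasing in each $u_{v'}$, hence turning ``$\leq$'' inputs into ``$\geq$'' outputs), and the clamp $u\mapsto\max(u,0)$; and because $A_{c_j,v_i}\geq 0$ the scalar can be pulled in and out of the $\max$/$\min$ freely. It therefore suffices to match the arguments of each SIPA update against those of the corresponding FIPA update term by term. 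For the VN-to-CN lower bound, (\ref{eq:mu_vc_seq}) equals $A_{c_j,v_i}\cdot\max_{c'\in\mathcal{N}(v_i)}\widetilde{\mu}_{c'\rightarrow v_i}$, where $\widetilde{\mu}_{c'\rightarrow v_i}=\mu_{c'\rightarrow v_i}^{(\ell,t_{c'})}$ when $c'$ is scheduled before $c_j$ and $\widetilde{\mu}_{c'\rightarrow v_i}=\mu_{c'\rightarrow v_i}^{(\ell-1,t_{c'})}$ otherwise (the case $c'=c_j$ falling into the latter); each $\widetilde{\mu}_{c'\rightarrow v_i}$ dominates the FIPA value $\mu_{c'\rightarrow v_i}^{(\ell-1)}$ used in (\ref{eq:mu_vc_fld}) --- directly from the hypothesis at iteration $\ell-1$ in the ``otherwise'' case, and via $\mu_{c'\rightarrow v_i}^{(\ell,t_{c'})}\geq\mu_{c'\rightarrow v_i}^{(\ell-1,t_{c'})}\geq\mu_{c'\rightarrow v_i}^{(\ell-1)}$ (SIPA interval-shrinking, then the hypothesis) in the ``scheduled earlier'' case --- so the $\max$, and thus $\mu_{v_i\rightarrow c_j}^{(\ell,t_j)}$, dominates $\mu_{v_i\rightarrow c_j}^{(\ell)}$. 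The upper-bound inequality from (\ref{eq:M_vc_seq}) versus (\ref{eq:M_vc_fld}) is the same with $\min$ and reversed inequalities. Feeding the VN-to-CN bounds just obtained into (\ref{eq:mu_cv_seq}) and (\ref{eq:M_cv_seq}) and using monotonicity of the affine map and the clamp then yields $\mu_{c_j\rightarrow v_i}^{(\ell,t_j)}\geq\mu_{c_j\rightarrow v_i}^{(\ell)}$ and $M_{c_j\rightarrow v_i}^{(\ell,t_j)}\leq M_{c_j\rightarrow v_i}^{(\ell)}$, completing the step; the SIPA interval-shrinking property is established inside the same induction by the standard IPA argument (order-preserving updates started from the extreme values $\mu=0$, $M=y/A$).

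The step I expect to be the main obstacle is the bookkeeping imposed by the sequential schedule. Within a single iteration $\ell$ the message on an edge $(v_i,c')$ may be rewritten several times, and (\ref{eq:mu_vc_seq}) deliberately blends iteration-$\ell$ messages from already-scheduled neighbours with iteration-$(\ell-1)$ messages from the not-yet-scheduled ones; making the term-by-term comparison rigorous requires exactly the secondary induction on $\tau$ together with SIPA interval-shrinking, so that the freshest value $\mu_{c'\rightarrow v_i}^{(\ell,t_{c'})}$ of an earlier-scheduled neighbour still provably exceeds the FIPA value $\mu_{c'\rightarrow v_i}^{(\ell-1)}$ it is being compared with --- the link fails without the monotonicity. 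Two minor points each need a remark: the clamp of a negative lower bound to $0$ (Steps~16--18 of Algorithm~\ref{alg:sipa} and its FIPA analogue) is order preserving, so it does not break the domination; and the degenerate branch $t_j=0$ in (\ref{eq:mu_cv_seq})--(\ref{eq:M_cv_seq}) never occurs here, since SIPA schedules every CN in every iteration, so $t_j\geq 1$ throughout.
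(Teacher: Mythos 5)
Your proposal is correct and follows essentially the same route as the paper: both rest on the domination (interval-nesting) claim that SIPA upper bounds never exceed and SIPA lower bounds never fall below their FIPA counterparts, from which $\mu_{v_i \rightarrow c_j}^{(\ell,t_j)}=0$ forces the nonnegative FIPA lower bound to vanish as well. The only difference is one of rigor: where the paper asserts $M_{v\rightarrow c}^{(\ell,t)}\leq M_{v\rightarrow c}^{(\ell)}$ informally from the ``more frequent updates'' of sequential scheduling and then passes through the incoming CN-to-VN messages, you prove the same domination by an explicit double induction on the iteration and the scheduling time using the order-preserving update maps, which is a welcome tightening rather than a new approach.
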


\begin{proof}
The IPA ensures that $M_{v \rightarrow c}^{(l)}\leq M_{v \rightarrow c}^{(l-1)}$ and $M_{v \rightarrow c}^{(l,t)}\leq M_{v \rightarrow c}^{(l-1,t)}$. Since a VN can be updated more than once in the same iteration during sequential scheduling, a VN to CN message is updated more frequently than flooding, provided the number of iterations for both schemes is identical. As a result, we have $M_{v \rightarrow c}^{(l,t)}\leq M_{v \rightarrow c}^{(l)}$, leading to $\mu_{c \rightarrow v}^{(\ell,t)}\geq \mu_{c \rightarrow v}^{(\ell)}$ based on (\ref{eq:mu_cv_fld}) and (\ref{eq:mu_cv_seq}). If SIPA propagates $\mu_{v_i \rightarrow c_j}^{(\ell,t_j)}=0$ for a given $\mathbf{x}$, it would imply according to (\ref{eq:mu_vc_seq}) that $\mu_{c_j \rightarrow v_i}^{(\ell-1,t_j)}= 0$ for all $c_j \in \mathcal{N}(v_i)$. Consequently, in case of flooding, it must be true that $\mu_{c_j \rightarrow v_i}^{(\ell-1)}= 0$ for all $c_j \in \mathcal{N}(v_i)$, and hence according to (\ref{eq:mu_vc_fld}), $\mu_{v_i \rightarrow c_j}^{(\ell)}=0$ for that $\mathbf{x}$.
\end{proof}

\begin{col} 
Theorem~\ref{thm:fld<seq} implies that if SIPA cannot recover a non-zero signal, $x(v_i)$, of $\mathbf{x}$ during iteration $\ell$, then FIPA also cannot recover it. 
\label{col:seq<fld}
\end{col}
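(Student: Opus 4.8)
The plan is to obtain Corollary~\ref{col:seq<fld} as an immediate consequence of Theorem~\ref{thm:fld<seq} together with the decision step (Steps 23--28) of Algorithm~\ref{alg:sipa} and its FIPA analogue (Step~3 of Algorithm~1 in~\cite{Ravanmehr}). First I would make precise what ``SIPA cannot recover a non-zero $x(v_i)$ during iteration $\ell$'' means. Since $x(v_i)>0$, correct recovery at the decision step forces $\hat{x}(v_i)=\mu_{v_i\rightarrow\mathcal{N}(v_i)}^{(\ell,\cdot)}=M_{\mathcal{N}(v_i)\rightarrow v_i}^{(\ell,\cdot)}$ with this common value equal to $x(v_i)>0$. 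Because $\mu_{v_i\rightarrow c_j}^{(\ell,t_j)}$ is a maximum of (scaled) incoming CN-to-VN lower bounds and, by the same monotonicity used in the proof of Theorem~\ref{thm:fld<seq} (the $M$'s are nonincreasing in the iteration index, hence the $\mu$'s are nondecreasing and bounded below by their initial value $0$), the lower bound out of $v_i$ never departs from $0$ unless some incoming CN-to-VN lower bound becomes positive. Thus the obstruction to recovering a non-zero $x(v_i)$ at iteration $\ell$ is precisely $\mu_{v_i\rightarrow c_j}^{(\ell,t_j)}=0$ for every $c_j\in\mathcal{N}(v_i)$, which is exactly the failure mode emphasized in the discussion preceding Lemma~\ref{lem:l>1}.

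With failure characterized this way, the argument is short. Suppose SIPA has not recovered $x(v_i)\neq 0$ by iteration $\ell$, so $\mu_{v_i\rightarrow c_j}^{(\ell,t_j)}=0$ for all $c_j\in\mathcal{N}(v_i)$. Applying Theorem~\ref{thm:fld<seq} to each edge incident to $v_i$ gives $\mu_{v_i\rightarrow c_j}^{(\ell)}=0$ for all $c_j\in\mathcal{N}(v_i)$; that is, the lower bound FIPA propagates out of $v_i$ in iteration $\ell$ is also $0$. In the FIPA decision step this means either no decision is taken for $v_i$ (a lower bound of $0$ cannot equal the upper bound $M_{c_j\rightarrow v_i}^{(\ell)}\geq x(v_i)>0$), or, at $\ell=\ell_{\max}$, the forced output is $\hat{x}(v_i)=0\neq x(v_i)$; in either case FIPA fails to recover $x(v_i)$ at iteration $\ell$. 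Invoking once more that the FIPA lower bound at $v_i$ is nondecreasing in the iteration index, it equalled $0$ at every iteration up to $\ell$, so FIPA had not recovered $x(v_i)$ at any earlier iteration either, which is the stated implication.

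I expect the only delicate point to be the characterization in the first paragraph: one must rule out that ``not recovered at iteration $\ell$'' could instead correspond to a lower bound that is positive but strictly below $x(v_i)$, a situation in which Theorem~\ref{thm:fld<seq} has no content. I would handle this by appealing to the soundness of the IPA---its propagated lower bounds never exceed, and its upper bounds never fall below, the true value, so a mismatch can only be resolved by the lower bound rising to $x(v_i)$---so that for a non-zero variable the relevant failure is indeed the lower bound being pinned at $0$; no machinery beyond Theorem~\ref{thm:fld<seq} and the monotonicity already established in its proof is required. It may also be worth remarking that the converse (FIPA failure implying SIPA failure) is false, in line with Propositions~\ref{prop:first_res} and~\ref{prop: l>1}.
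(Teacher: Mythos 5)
Your argument matches the paper's intended reasoning: the corollary is stated without a separate proof, precisely as an immediate consequence of Theorem~\ref{thm:fld<seq} applied to every edge incident to $v_i$ together with the decision step, which is exactly what you spell out. The one soft spot you flag yourself --- ruling out a lower bound that is positive yet strictly below $x(v_i)$, which your ``soundness'' appeal does not quite eliminate for real-valued signals --- is dispatched more cleanly by the domination $\mu_{v_i \rightarrow c_j}^{(\ell)}\leq \mu_{v_i \rightarrow c_j}^{(\ell,t_j)}$ already established in the proof of Theorem~\ref{thm:fld<seq} (so FIPA's lower bound also falls short of $x(v_i)$ and its decision step likewise fails), but this does not change the route.
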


\section{FER Comparison Between FIPA and SIPA}

In this section, we show that by allowing a reduction in complexity, SIPA does not incur a worse FER than its flooding counterpart, where FER is measured as $\Pr(\mathbf{x}\neq \mathbf{\hat{x}})$. Let $\mathrm{supp}(\mathbf{x})$ denote the support, \emph{i.e.,} the set of non-zero entries, of $\mathbf{x}$. For a measurement matrix with Tanner graph $\mathcal{G}_{\mathbf{A}}$ and signal weight $K=|\mathrm{supp}(\mathbf{x})|$, let $\mathrm{FER}(\mathcal{G}_{\mathbf{A}},K,\ell)$ and $\mathrm{FER}(\mathcal{G}_{\mathbf{A}},K,\ell,\Gamma_v)$ be the FER of FIPA and SIPA, respectively, where $\Gamma_v$ represents the CN scheduling order of any VN $v\in \mathcal{V}$. Suppose $\mathcal{G}_{\mathbf{A}}$ corresponds to a $(2,3)$ LDPC measurement matrix. Then, we get the following result.
\begin{lem}
For a $(2,3)$ LDPC measurement matrix, $\mathrm{FER}(\mathcal{G}_{\mathbf{A}},K,\ell,\Gamma_v)\leq \mathrm{FER}(\mathcal{G}_{\mathbf{A}},K,2)$, for any $\ell>1$. 
\label{lem:bound}
\end{lem}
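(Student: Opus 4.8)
The plan is to prove that whenever SIPA fails to recover $\mathbf{x}$ within $\ell$ iterations, FIPA also fails to recover $\mathbf{x}$ within $2$ iterations, and then to take probabilities over $\mathbf{x}$ with $|\mathrm{supp}(\mathbf{x})|=K$. I would split this into two reductions: (A) the FER of SIPA is non-increasing in the iteration count, so that it suffices to treat $\ell=2$; and (B) $\mathrm{FER}(\mathcal{G}_{\mathbf{A}},K,2,\Gamma_v)\le\mathrm{FER}(\mathcal{G}_{\mathbf{A}},K,2)$, which I would read off from Theorem~\ref{thm:fld<seq} and Corollary~\ref{col:seq<fld} applied at $\ell=2$.

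For (A), I would first observe that because $\mathbf{A}\in\mathbb{R}_{\ge0}^{m\times n}$ and $\mathbf{x}\in\mathbb{R}_{\ge0}^{n}$, every propagated interval is \emph{valid}: by induction from the initialization and the update rules (\ref{eq:mu_vc_seq})--(\ref{eq:M_cv_seq}), the CN-to-VN interval $[\mu_{c\rightarrow v}^{(\ell,t)},M_{c\rightarrow v}^{(\ell,t)}]$ always contains $x(v)$ and the VN-to-CN interval always contains $A_{c,v}x(v)$, the clipping of negative lower bounds only replacing a negative value by $0\le x(v)$. The same rules make lower endpoints non-decreasing and upper endpoints non-increasing in $\ell$, so the intervals are nested. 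Hence, once a VN $v_i$ attains $\mu_{v_i\rightarrow\mathcal{N}(v_i)}^{(\ell,t)}=M_{\mathcal{N}(v_i)\rightarrow v_i}^{(\ell,t)}$ at iteration $2$, this common value is forced to be $x(v_i)$, the interval stays a single point thereafter, and the decision in Step 24 (including the forced decision at $\ell=\ell_{\max}$) can only re-output $x(v_i)$. Therefore a signal recovered by SIPA in $2$ iterations is recovered in $\ell$ iterations for every $\ell>1$, which gives $\mathrm{FER}(\mathcal{G}_{\mathbf{A}},K,\ell,\Gamma_v)\le\mathrm{FER}(\mathcal{G}_{\mathbf{A}},K,2,\Gamma_v)$.

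For (B), if SIPA does not recover some $x(v_i)$ after $2$ iterations, then $\mu_{v_i\rightarrow c_j}^{(2,t_j)}\ne M_{c_j\rightarrow v_i}^{(2,t_j)}$ for every $c_j\in\mathcal{N}(v_i)$; combining Theorem~\ref{thm:fld<seq} with the inequality $M_{v\rightarrow c}^{(\ell,t)}\le M_{v\rightarrow c}^{(\ell)}$ used in its proof yields $\mu_{v_i\rightarrow c_j}^{(2)}\le\mu_{v_i\rightarrow c_j}^{(2,t_j)}$ and $M_{c_j\rightarrow v_i}^{(2)}\ge M_{c_j\rightarrow v_i}^{(2,t_j)}$, so the FIPA interval is at least as wide and FIPA also fails to recover $x(v_i)$ after $2$ iterations — this is exactly Corollary~\ref{col:seq<fld} at $\ell=2$. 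Taking the union over $v_i\in\mathcal{V}$ transfers the containment to the whole-signal failure events, giving (B), and chaining with (A) yields $\mathrm{FER}(\mathcal{G}_{\mathbf{A}},K,\ell,\Gamma_v)\le\mathrm{FER}(\mathcal{G}_{\mathbf{A}},K,2)$ for all $\ell>1$. The $(2,3)$ hypothesis is what makes $2$ the natural FIPA comparison count: with VN degree $\gamma=2$ the maximum in (\ref{eq:mu_vc_seq}) (resp. the minimum in (\ref{eq:M_vc_seq})) ranges over only two incoming CN messages, so already in its second iteration SIPA realizes precisely the $E_1\cap E_2$ advantage of Lemma~\ref{lem:iter1} and Proposition~\ref{prop:first_res} that FIPA needs its second iteration to even begin accessing.

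The step I expect to be the main obstacle is making (A) fully rigorous rather than merely intuitive: one must verify that the conditional update in Step 8 and the counter $p_{v_i}^{(\ell)}$ of Step 11 do not break interval nesting (a VN that is not re-updated in some iteration simply retains a nested interval) and that the forced decision at $\ell=\ell_{\max}$ never overwrites an already-correct estimate by an incorrect one. A secondary point to spell out is that Theorem~\ref{thm:fld<seq} is stated for VN-to-CN lower bounds, whereas (B) also needs its upper-bound counterpart and needs it simultaneously for all neighbors $c_j$ of $v_i$; both extensions are routine consequences of $M_{v\rightarrow c}^{(\ell,t)}\le M_{v\rightarrow c}^{(\ell)}$ but should be written out.
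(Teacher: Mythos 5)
The paper states Lemma~\ref{lem:bound} without any accompanying argument (neither it nor Theorem~\ref{thm:FER} is proved in the text), so there is no in-paper proof to match you against; judged on its own terms, your two-step reduction is sound. Step (A) — per-signal monotonicity of SIPA success in the iteration count via interval validity (every interval contains $x(v_i)$, resp.\ $A_{c_j,v_i}x(v_i)$, because $\mathbf{x}$ and $\mathbf{A}$ are non-negative and clipping only raises a negative lower bound to $0\le x(v_i)$) and nestedness — is correct, and your fix for the forced decision at $\ell_{\max}$ is the right one: since $\mu$ is non-decreasing and bounded above by $x(v_i)$, once it equals $x(v_i)$ it stays there, so extra iterations never overwrite a correct estimate. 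Step (B) is the per-signal dominance of SIPA over FIPA at equal iteration count, which you correctly source from the inequalities $M_{v\rightarrow c}^{(\ell,t)}\le M_{v\rightarrow c}^{(\ell)}$ and $\mu_{c\rightarrow v}^{(\ell,t)}\ge\mu_{c\rightarrow v}^{(\ell)}$ appearing in the proof of Theorem~\ref{thm:fld<seq} and from Corollary~\ref{col:seq<fld}; note these are asserted rather than derived in the paper, so if you want the lemma self-contained you should prove them by the same schedule-indexed induction you sketch for (A) (they follow term by term because the max in (\ref{eq:mu_vc_seq}) and the min in (\ref{eq:M_vc_seq}) range over the same neighbor set as (\ref{eq:mu_vc_fld}) and (\ref{eq:M_vc_fld}), each sequential argument being at least as tight by the induction hypothesis). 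Two further observations: your argument nowhere uses the $(2,3)$ hypothesis — it actually delivers the stronger chain $\mathrm{FER}(\mathcal{G}_{\mathbf{A}},K,\ell,\Gamma_v)\le\mathrm{FER}(\mathcal{G}_{\mathbf{A}},K,2,\Gamma_v)\le\mathrm{FER}(\mathcal{G}_{\mathbf{A}},K,2)$ for any non-negative LDPC measurement matrix, which is consistent with Theorem~\ref{thm:FER} combined with FIPA's own iteration monotonicity, so proving more than stated is not a flaw; accordingly, your closing heuristic about $\gamma=2$ making $2$ the ``natural'' comparison count is not load-bearing and should be dropped or clearly labeled as motivation, since as written it suggests the hypothesis enters the proof when it does not. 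Also make explicit that failures can only occur on $\mathrm{supp}(\mathbf{x})$ (for $x(v_i)=0$ validity plus clipping forces $\hat{x}(v_i)=0$ under both schedules), so the per-component implication ``FIPA recovers $x(v_i)$ $\Rightarrow$ SIPA recovers $x(v_i)$'' indeed transfers to the whole-frame events whose probabilities define the two FERs.
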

\begin{thm}
For a $(2,3)$ LDPC measurement matrix, $\mathrm{FER}(\mathcal{G}_{\mathbf{A}},K,\ell,\Gamma_v)\leq \mathrm{FER}(\mathcal{G}_{\mathbf{A}},K,\ell)$, for any $\ell>0$.
\label{thm:FER}
\end{thm}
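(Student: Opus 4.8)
The plan is to prove the pointwise statement that, for every realization of $\mathbf{x}$ with $|\mathrm{supp}(\mathbf{x})|=K$ and every scheduling order $\Gamma_v$, whenever $\mathrm{FIPA}(\mathbf{y},\mathbf{A})$ run for $\ell_{\max}=\ell$ iterations outputs $\mathbf{\hat{x}}=\mathbf{x}$, so does $\mathrm{SIPA}(\mathbf{y},\mathbf{A})$ run for $\ell_{\max}=\ell$ iterations; taking probabilities then yields $\mathrm{FER}(\mathcal{G}_{\mathbf{A}},K,\ell,\Gamma_v)\le \mathrm{FER}(\mathcal{G}_{\mathbf{A}},K,\ell)$. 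I would split on $\ell$. The case $\ell=1$ is immediate from Proposition~\ref{prop:first_res}: at the first iteration FIPA propagates $\mu_{v_i\rightarrow c_j}^{(1)}=0$ and $M_{c_j\rightarrow v_i}^{(1)}=y(c_j)$ on every edge, so $\mu=M$ never holds at a check node incident to a nonzero VN and FIPA fails on every signal with $K\ge1$; hence $\mathrm{FER}(\mathcal{G}_{\mathbf{A}},K,1)=1\ge \mathrm{FER}(\mathcal{G}_{\mathbf{A}},K,1,\Gamma_v)$, and $K=0$ is trivial.

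For $\ell\ge2$ the core is an \emph{interval-nesting} claim: for every iteration $\ell$, every scheduling time $t$ in it, and every edge $(v,c)$,
\begin{equation*}
\mu_{v\rightarrow c}^{(\ell,t)}\ge \mu_{v\rightarrow c}^{(\ell)},\qquad M_{v\rightarrow c}^{(\ell,t)}\le M_{v\rightarrow c}^{(\ell)},
\end{equation*}
and likewise $\mu_{c\rightarrow v}^{(\ell,t)}\ge \mu_{c\rightarrow v}^{(\ell)}$ and $M_{c\rightarrow v}^{(\ell,t)}\le M_{c\rightarrow v}^{(\ell)}$. This is the same chain of monotonicities already invoked in the proof of Theorem~\ref{thm:fld<seq}: the $M$-messages are non-increasing and the $\mu$-messages non-decreasing both along iterations and along the sequential schedule, and in (\ref{eq:mu_vc_seq})--(\ref{eq:M_cv_seq}) the earlier-scheduled check nodes $c_e$ contribute their current iteration-$\ell$ messages, which by the schedule-induction already dominate the corresponding flooding messages, while the later and self terms use iteration-$(\ell-1)$ messages covered by the induction on $\ell$. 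The $(2,3)$ restriction makes this clean: each VN has exactly two neighboring CNs, so $\mathcal{C}_e(v_i)$ and $\mathcal{C}_l(v_i)$ in (\ref{eq:mu_vc_seq}),(\ref{eq:M_vc_seq}) each hold at most one element.

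Granting the nesting, I would finish as follows. Both algorithms maintain the invariant $\mu\le x(v)\le M$, since (\ref{eq:mu_cv_seq}) and (\ref{eq:M_cv_seq}) are exact rearrangements of $\mathbf{y}=\mathbf{A}\mathbf{x}$ with only valid bounds substituted. A zero VN is assigned $\hat{x}(v)=\mu=0$ by both schemes, so it is always recovered; a nonzero VN $v$ is recovered by FIPA at iteration $\ell$ exactly when $\mu_{v\rightarrow\mathcal{N}(v)}^{(\ell)}=x(v)$ (in particular when its interval collapses). The nesting gives $\mu_{v\rightarrow\mathcal{N}(v)}^{(\ell,t)}\ge \mu_{v\rightarrow\mathcal{N}(v)}^{(\ell)}=x(v)$, and the invariant forces equality, so SIPA also outputs $\hat{x}(v)=x(v)$; this is Corollary~\ref{col:seq<fld} applied node by node. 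Intersecting over all VNs gives the pointwise implication, hence the theorem. For $\ell=2$ one may instead simply invoke Lemma~\ref{lem:bound}, whose $\ell=2$ instance is exactly $\mathrm{FER}(\mathcal{G}_{\mathbf{A}},K,2,\Gamma_v)\le \mathrm{FER}(\mathcal{G}_{\mathbf{A}},K,2)$.

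The step I expect to be the main obstacle is making the interval-nesting induction rigorous, specifically the earlier-scheduled terms $\mu_{c_e\rightarrow v_i}^{(\ell,t_e)}$ and $M_{c_e\rightarrow v_i}^{(\ell,t_e)}$ in (\ref{eq:mu_vc_seq}),(\ref{eq:M_vc_seq}): they belong to the current iteration and are themselves assembled from VN-to-CN messages whose domination over the flooding messages must already be available when they are used. This forces a carefully ordered double induction --- outer on the iteration index, inner on the scheduling time within an iteration --- and one must check that every message indexed with time $t_{v'}\le t_j$ in (\ref{eq:mu_cv_seq}),(\ref{eq:M_cv_seq}) has indeed been processed by that point. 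The degree-two structure of the $(2,3)$ graph keeps this bookkeeping finite; a general $(\gamma,\rho)$ matrix, where $\mathcal{C}_e(v_i)$ and $\mathcal{C}_l(v_i)$ can be large, is where the argument would require more care.
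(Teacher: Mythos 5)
Your argument is sound in outline, but it reaches the theorem by a different route than the paper's development suggests. The paper's path to Theorem~\ref{thm:FER} is staged through the $(2,3)$-specific Lemma~\ref{lem:bound} (bounding SIPA's FER at any $\ell>1$ by FIPA's FER at $\ell=2$), whereas you never invoke the degree-two structure in any essential way: your proof is a pointwise success-domination argument --- FIPA success on a given $\mathbf{x}$ implies SIPA success on the same $\mathbf{x}$ under any $\Gamma_v$ --- obtained from the interval-nesting $\mu^{(\ell,t)}\geq\mu^{(\ell)}$, $M^{(\ell,t)}\leq M^{(\ell)}$ together with the invariant $\mu\leq A_{c,v}x(v)\leq M$ and the decision rule (i.e., Corollary~\ref{col:seq<fld} applied node by node, plus the observation that zero VNs are always recovered). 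That nesting is exactly the cross-schedule domination asserted in the proof of Theorem~\ref{thm:fld<seq}, so your route buys generality: if the double induction (outer on iteration, inner on scheduling time) is written out, the FER inequality follows for any $(\gamma,\rho)$ LDPC measurement matrix, not only $(2,3)$, and it subsumes the $\ell=2$ case you delegate to Lemma~\ref{lem:bound}. The paper's staging through Lemma~\ref{lem:bound}, by contrast, yields the additional $(2,3)$-specific statement that sequential scheduling at any depth is no worse than flooding truncated at two iterations, which your argument does not give.

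Two points need care to make this airtight. First, the nesting induction is only sketched (as you acknowledge, and as the paper itself does in Theorem~\ref{thm:fld<seq}): you must verify that the conditional VN update in Step 8 of Algorithm~\ref{alg:sipa} (skipping a VN whose interval has already collapsed) and the clipping $\mu_{c_j\rightarrow v_i}^{(\ell,t_j)}\leftarrow 0$ preserve domination --- both do, since a collapsed interval sits at the true value by the invariant and clipping only raises the sequential lower bound. Second, your characterization ``FIPA recovers a nonzero $x(v)$ iff $\mu_{v\rightarrow\mathcal{N}(v)}^{(\ell)}=x(v)$'' uses the edge-weight scaling implicitly; it is immediate for binary $\mathbf{A}$ and needs only a one-line normalization by $A_{c,v}$ in general, consistent with the setting of Propositions~\ref{prop:first_res} and~\ref{prop: l>1}.
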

Theorem \ref{thm:FER} shows that by allowing a reduction in complexity, SIPA does not incur a worse FER than its flooding counterpart.
 
\vspace{-0.5cm}
\section{Simulation Results}
We compare FIPA to SIPA using a $(3,7)$ LDPC measurement matrix $\mathbf{A}_1\in \mathbb{F}_2^{400\times 700}$ which is randomly generated. For interval passing we choose $\ell_\max=50$. The probability of correct reconstruction (PCR) is given by $1-\Pr(\mathbf{x}\neq \mathbf{\hat{x}})$, and used as a measure of the reconstruction performance of a measurement matrix for different values of signal sparsity $k/n$, where $k$ is the number of non-zero entries in $\mathbf{x}\in \mathbb{F}_2^n$. Note that for a given sparsity, the set of signals that are used for simulating FIPA is also used for simulating SIPA. 

Let CMP, SUB and ADD refer to macro-operations involving comparison ($\max$ or $\min$), subtraction and addition, respectively. Also, let $O_{\mathrm{CMP}}$, $O_{\mathrm{SUB}}$, and $O_{\mathrm{ADD}}$ refer to the number of CMP, SUB, and ADD operations, respectively, each having value $1$ (see \cite[page 10]{Fog}). 

For algorithm complexity calculation, we take into account the number of elementary operations needed for each CN to VN (and VN to CN) message lower bound computation. The complexity of upper bound calculation is ignored since by doing so, the comparison of complexities between $\mathrm{SIPA}(\mathbf{y,\mathbf{A}})$ and $\mathrm{FIPA}(\mathbf{y,\mathbf{A}})$ is unaffected. Since binary signals and measurement matrices are considered, the multiplication and division operations involved in message computation can be ignored. Then, according to (\ref{eq:mu_vc_fld}), the complexity of a single VN to CN message is given by $(\gamma-1)\gamma \times O_{\mathrm{CMP}}$. On the other hand, according to (\ref{eq:mu_cv_fld}), the complexity of a single CN to VN message is given by $O_{\mathrm{SUB}}+(\rho-1) \times O_{\mathrm{ADD}}$. Note that Step 11 of Algorithm \ref{alg:sipa} incurs an additional complexity of computing $p_{v_i}^{(\ell)}$ which is not present in FIPA. This added complexity is simply the number of VN to CN messages propagated by $\mathrm{SIPA}(\mathbf{y,\mathbf{A}})$ times $O_{\mathrm{ADD}}$. The final message passing complexity, in terms of the average number of elementary computations, is obtained by multiplying the complexities of the above macro-operations with the average number of CN to VN, or VN to CN messages executed during simulation for a fixed sparsity. We then add the resulting average complexities of both message types to get the final average complexity. For SIPA, we add to this result the average number of VN to CN messages propagated during simulation to take into account the cost of computing~$p_{v_i}^{(\ell)}$. 
 
\indent The PCR vs. sparsity plots for the $\mathbf{A}_1$ measurement matrix based on flooding and sequential interval passing techniques are shown in Fig. \ref{fig:res12}. For this measurement matrix, SIPA requires fewer iterations on average compared to FIPA for the same PCR given a fixed signal sparsity. At a sparsity of $0.06$, SIPA requires $42.4$\% fewer iterations on average than FIPA using the $A_1$ measurement matrix. Additionally, in Table \ref{tab:A3}, we compare the average number of messages propagated by the measurement matrix to attain the results shown in Fig. \ref{fig:res12}. The last row of the table shows the percentage reduction of complexity achieved by SIPA when compared to its flooding counterpart. These percentages are calculated by taking the ratio of the average complexities shown in rows five and six of the table. From the last row of Table \ref{tab:A3} we note that at a sparsity of $0.06$, SIPA requires $36$\% fewer arithmetic operations with respect to FIPA for computing all the messages. Moreover, the table reveals that regardless of the measurement matrix, the complexity advantage of sequential scheduling diminishes as the signal sparsity becomes too large. 

\vspace{-0.35cm}
\begin{figure}[h]
  \centering
  \includegraphics[scale=0.35]{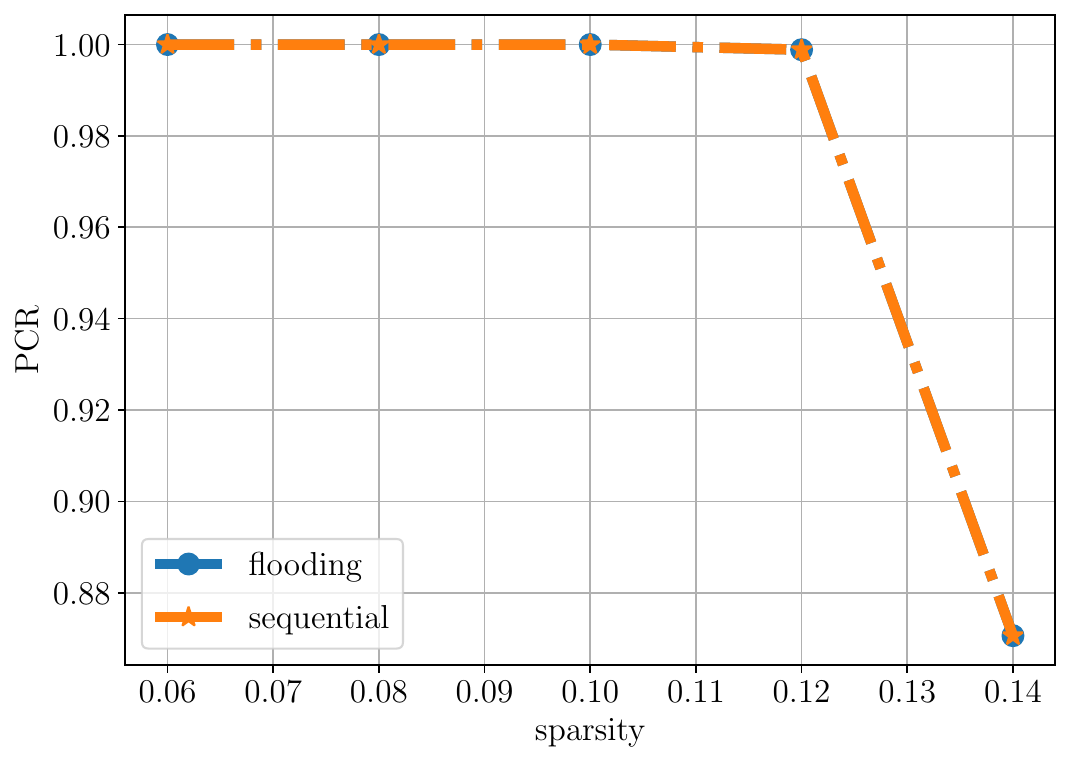}
  \vspace{-0.12cm}
  \caption{Reconstruction performance of $\mathbf{A}_1$ using different interval passing schemes.}
  \label{fig:res12}
\end{figure}

\vspace{-0.6cm}
{\linespread{0.2}\selectfont
\begin{table}[h]
\small
\centering
	\begin{tabular}{|c|c|c|c|}
		\hline
		\textbf{sparsity}
		&$\mathbf{0.06}$ 
		&$\mathbf{0.10}$    
		&$\mathbf{0.14}$ \\
		\hline\hline
		avg. \# of CN to VN msg. fld.&$8163$&$12700$&$38548$\\ 
		\hline
		avg. \# of CN to VN msg. seq.&$4703$&$7866$&$28636$\\ 
		\hline
		avg. \# of VN to CN msg. fld.&$8163$&$12700$&$38548$\\ 
		\hline
		avg. \# of VN to CN msg. seq.&$5000$&$8989$&$40144$\\ 
		\hline
		avg. complexity fld.&$106120$&$165102$&$501120$\\ 
		\hline
		avg. complexity seq.&$67920$&$117986$&$481458$\\ 
		\hline
		avg. \% reduction of complexity&$35.99$&$28.54$&$3.92$\\ 
		\hline
	\end{tabular}
	\quad
	\vspace{-0.15cm}
	\caption{Complexity results using the $\mathbf{A}_1$ measurement matrix for attaining the results shown in Fig. \ref{fig:res12}.}
\label{tab:A3} 
\end{table}
}

\vspace{-0.4cm}
\section{Conclusion}
In this paper, we improved the IPA via sequential scheduling of the CNs in the Tanner graph of an LDPC measurement matrix. SIPA propagates a significantly lower number of iterations compared to the standard FIPA, and by doing so the accuracy of sparse signal reconstruction is not compromised. We also show that for some LDPC measurement matrix, the FER of $\mathrm{SIPA}(\mathbf{y},\mathbf{A})$ is upper bounded by the FER of $\mathrm{FIPA}(\mathbf{y},\mathbf{A})$. Future research will take into account methods for optimizing $\mathcal{G}_{\mathbf{A}}$ to further reduce the sequential scheduling complexity of $\mathrm{SIPA}(\mathbf{y},\mathbf{A})$.

\newpage
\bibliographystyle{plain}
\bibliography{Bib}

\begin{thebibliography}{10}

\bibitem{Bayati}
Mohsen Bayati and Andrea Montanari.
\newblock The dynamics of message passing on dense graphs, with applications to
  compressed sensing.
\newblock {\em IEEE Trans. on Inf. Theory}, 57(2):764--785, 2011.

\bibitem{Candes}
E.~J. Candes, J.~K. Romberg, and T.~Tao.
\newblock Stable signal recovery from incomplete and inaccurate measurements.
\newblock {\em Communications on Pure and Applied Mathematics},
  59(8):1207--1223, 2006.

\bibitem{Chandar}
V.~Chandar, D.~Shah, and G.~W. Wornell.
\newblock A simple message-passing algorithm for compressed sensing.
\newblock {\em Proceedings of the IEEE Int'l Symp. on Inf. Theory}, pages
  1968--1972, Jun. 2010.

\bibitem{CDS98}
S.~S. Chen, D.~L. Donoho, and M.~A. Saunders.
\newblock Atomic decomposition by basis pursuit.
\newblock {\em SIAM J. Comput}, 20(1):33--61, Aug. 1998.

\bibitem{Don06}
D.~L. Donoho.
\newblock Compressed sensing.
\newblock {\em IEEE Trans. on Inf. Theory}, 52(4):1289--1306, Apr 2006.

\bibitem{DonMalMon09}
D.~L. Donoho, A.~Maleki, and A.~Montanari.
\newblock Message passing algorithms for compressed sensing: I. motivation and
  construction.
\newblock {\em Proc.~IEEE Inf. Theory Workshop (ITW), Cairo}, July 2010.

\bibitem{Fog}
Agner Fog.
\newblock Lists of instruction latencies, throughputs and micro-operation
  breakdowns for {I}ntel, {AMD}, and {VIA} {CPU}s.
\newblock [Online]. Available:
  https://www.agner.org/optimize/instruction\_tables.pdf, 2022.

\bibitem{Chunli}
Chunli Guo and Mike~E. Davies.
\newblock Near optimal compressed sensing without priors: Parametric sure
  approximate message passing.
\newblock {\em IEEE Trans. on Signal Processing}, 63(8):2130--2141, 2015.

\bibitem{reldec}
Salman Habib, Allison Beemer, and Jörg Kliewer.
\newblock {RELDEC}: Reinforcement learning-based decoding of moderate length
  {LDPC} codes.
\newblock {\em IEEE Trans. on Commun.}, 71(10):5661--5674, 2023.

\bibitem{mycs}
Salman Habib and Jörg Kliewer.
\newblock Algebraic optimization of binary spatially coupled measurement
  matrices for interval passing.
\newblock In {\em 2018 IEEE Inf. Theory Workshop (ITW)}, pages 1--5, 2018.

\bibitem{Shantharam}
Shantharam Kalipatnapu and Indrajit Chakrabarti.
\newblock Low-complexity interval passing algorithm and {VLSI} architecture for
  binary compressed sensing.
\newblock {\em IEEE Trans. Very Large Scale Integr. (VLSI) Syst.},
  28(5):1283--1291, 2020.

\bibitem{Kim15}
Taejoon Kim and David~J. Love.
\newblock {Virtual AoA and AoD estimation for sparse millimeter wave MIMO
  channels}.
\newblock In {\em 2015 IEEE 16th Intl. Workshop on Signal Processing Advances
  in Wireless Commun. (SPAWC)}, pages 146--150, 2015.

\bibitem{Krishnan}
Anantha~Raman Krishnan, Swaminathan Sankararaman, and Bane Vasic.
\newblock Graph-based iterative reconstruction of sparse signals for compressed
  sensing.
\newblock In {\em 2011 10th International Conference on Telecommunication in
  Modern Satellite Cable and Broadcasting Services (TELSIKS)}, volume~1, pages
  133--137, 2011.

\bibitem{10206981}
Dang~Qua Nguyen and Taejoon Kim.
\newblock On the stability of approximate message passing with independent
  measurement ensembles.
\newblock In {\em 2023 IEEE International Symposium on Information Theory
  (ISIT)}, pages 779--784, 2023.

\bibitem{Hoa}
Hoa~V. Pham, Wei Dai, and Olgica Milenkovic.
\newblock Sublinear compressive sensing reconstruction via belief propagation
  decoding.
\newblock In {\em 2009 Proc. of the IEEE Int'l Symp. on Inf. Theory}, pages
  674--678, 2009.

\bibitem{Ravanmehr}
V.~Ravanmehr, L.~Danjean, B.~Vasic, and D.~Declercq.
\newblock Interval-passing algorithm for non-negative measurement matrices:
  Performance and reconstruction analysis.
\newblock {\em IEEE J. Sel. Topics Circuits and Systems}, 2(3):424--432, Sep.
  2012.

\bibitem{Sarvotham}
Shriram Sarvotham, Dror Baron, and Richard Baraniuk.
\newblock Compressed sensing reconstruction via belief propagation.
\newblock 2006.

\bibitem{Nikolajs}
Nikolajs Skuratovs and Michael~E. Davies.
\newblock Compressed sensing with upscaled vector approximate message passing.
\newblock {\em IEEE Trans. on Inf. Theory}, 68(7):4818--4836, 2022.

\bibitem{Tan81}
R.~M. Tanner.
\newblock A recursive approach to low complexity codes.
\newblock {\em IEEE Trans. on Inf. Theory}, 27(5):547--553, Sep 1981.

\end{thebibliography}
\end{document}